\documentclass[sigplan,10pt]{acmart}

\renewcommand\footnotetextcopyrightpermission[1]{}
\makeatletter
\renewcommand\@ACM@checkaffil{}
\makeatother

\usepackage[utf8]{inputenc}
\usepackage[T1]{fontenc}
\usepackage{amsmath}
\usepackage{graphicx}
\usepackage{booktabs}
\usepackage{microtype}
\usepackage{hyperref}
\usepackage{algorithm}
\usepackage{xspace}
\usepackage[noend]{algpseudocode}
\usepackage{pbalance}

\usepackage{enumitem}
\usepackage{amsthm}
\usepackage{array}
\usepackage{xcolor}
\usepackage{hyperref}
\usepackage{cleveref}
\usepackage{comment}
\usepackage{makecell}
\usepackage{subcaption}
\usepackage{multirow}

\newtheorem{theorem}{Theorem}
\newtheorem{lemma}{Lemma}

\newenvironment{proofsketch}{\begin{proof}[Proof sketch]}{\end{proof}}

\newcommand{\sectionref}[1]{\S~\ref{sec:#1}}
\newcommand{\appendixref}[1]{\appendixname~\ref{app:#1}}
\newcommand{\figureref}[1]{Figure~\ref{fig:#1}}
\newcommand{\tableref}[1]{Table~\ref{tab:#1}}
\newcommand{\theoremref}[1]{Theorem~\ref{thm:#1}}
\newcommand{\lemmaref}[1]{Lemma~\ref{lem:#1}}

\newcommand{\algorithmref}[1]{Algorithm~\ref{alg:#1}}

\definecolor{dkgreen}{rgb}{0,0.6,0}

\newcounter{evalq}
\newcommand{\evalquestion}[3]{
    \refstepcounter{evalq}\label{eq:#2}
    \textbf{Q\theevalq: #1} #3
}

\newcommand{\stdssd}{\texttt{1-SSD}\xspace}
\newcommand{\highssd}{\texttt{4-SSD}\xspace}

\newcommand{\challenger}{\mathcal{C}}
\newcommand{\adversary}{\mathcal{A}}
\newcommand{\searchsystem}{\mathsf{ANNS}}
\newcommand{\publicparams}{\mathsf{param}}

\newcommand{\negl}{\mathsf{negl}}
\newcommand{\posmap}{\mathsf{PosMap}}
\newcommand{\stash}{\mathsf{Stash}}
\newcommand{\metadata}{\mathsf{Meta}}
\newcommand{\AccessCount}{\mathsf{AccessCount}}
\newcommand{\Bandwidth}{\mathsf{Bandwidth}}
\newcommand{\Enc}{\mathsf{Enc}}
\newcommand{\Dec}{\mathsf{Dec}}

\newcommand{\oramver}{\mathsf{ver}}
\newcommand{\oramptrs}{\mathsf{ptrs}}
\newcommand{\oramvalid}{\mathsf{valid}}
\newcommand{\oramcount}{\mathsf{count}}
\newcommand{\dummycount}{\mathsf{dummy}}
\newcommand{\oramaddrs}{\mathsf{addrs}}

\newcommand{\oramoffset}{\mathsf{offset}}
\newcommand{\oramdata}{\mathsf{data}}
\newcommand{\oramop}{\mathsf{op}}
\newcommand{\oramsk}{\mathsf{sk}}
\newcommand{\oramaad}{\mathsf{aad}}
\newcommand{\pmleaf}{\mathsf{leaf}}
\newcommand{\pmlvl}{\mathsf{lvl}}
\newcommand{\pmslot}{\mathsf{slot}}

\newcommand{\idx}{\mathcal{I}}

\newcommand{\queryvec}{x_q}
\newcommand{\pointvec}{x_p}
\newcommand{\embvec}[1]{x_{#1}}
\newcommand{\query}{q}
\newcommand{\point}{p}
\newcommand{\Setup}{\mathsf{Setup}}
\newcommand{\Search}{\mathsf{Search}}
\newcommand{\Insert}{\mathsf{Insert}}
\newcommand{\Delete}{\mathsf{Delete}}
\newcommand{\GreedySearch}{\mathsf{GreedySearch}}
\newcommand{\RobustPrune}{\mathsf{RobustPrune}}
\newcommand{\traversal}{\mathsf{trv}}
\newcommand{\pruning}{\mathsf{prn}}
\newcommand{\dist}{\mathsf{dist}}

\newcommand{\traversalhints}{\mathcal{H}_\traversal}
\newcommand{\pruninghints}{\mathcal{H}_\pruning}
\newcommand{\prunedlist}{\mathcal{L}_\pruning}
\newcommand{\resultlist}{\mathcal{R}}
\newcommand{\prunedlistsize}{L_\pruning}
\newcommand{\candidatelistsize}{L}
\newcommand{\candidatelist}{\mathcal{L}}
\newcommand{\visitedlist}{\mathcal{V}}
\newcommand{\fullprecision}{\mathcal{F}}
\newcommand{\adjacencylists}{\mathcal{N}}

\newcommand{\toolname}{\textsc{Onyx}\xspace}

\title{\toolname: Cost-Efficient Disk-Oblivious ANN Search}
\author{Deevashwer Rathee}
\authornote{This work was done during an internship at NVIDIA.}
\affiliation{
  \institution{UC Berkeley, NVIDIA}
}

\author{Jean-Luc Watson}
\affiliation{
  \institution{NVIDIA}
}

\author{Zirui Neil Zhao}
\affiliation{
  \institution{UT Austin, NVIDIA}
}

\author{G. Edward Suh}
\affiliation{
  \institution{NVIDIA}
}

\author{Raluca Ada Popa}
\affiliation{
  \institution{UC Berkeley}
}

\begin{document}

\begin{abstract}
Approximate nearest neighbor (ANN) search in AI systems increasingly handles sensitive data on third-party infrastructure.
Trusted execution environments (TEEs) offer protection, but cost-efficient deployments must rely on external SSDs, which leaks user queries through disk access patterns to the host.
Oblivious RAM (ORAM) can hide these access patterns but at a high cost; when paired with existing disk-based ANN search techniques, it makes poor use of SSD resources, yielding high latency and poor cost-efficiency.

The core challenge for efficient oblivious ANN search over SSDs is balancing both bandwidth and access count.
The state-of-the-art ORAM-ANN design minimizes access count at the ANN level and bandwidth at the ORAM level, each trading-off the other, leaving the combined system with both resources overutilized.
We propose inverting this design, minimizing bandwidth consumption in the ANN layer and access count in the ORAM layer, since each component is better suited for its new role: ANN's inherent approximation allows for more bandwidth efficiency, while ORAM has no fundamental lower bounds on access count (as opposed to bandwidth).
To this end, we propose a cost-efficient approach, \toolname, with two new co-designed components:
\toolname-ANNS introduces a compact intermediate representation that proactively prunes the majority of bandwidth-intensive accesses without hurting recall, and \toolname-ORAM proposes a locality-aware shallow tree design that reduces access count while remaining compatible with bandwidth-efficient ORAM techniques.
Compared to the state-of-the-art oblivious ANN search system, \toolname achieves $1.7$--$9.9\times$ lower cost and $2.3$--$12.3\times$ lower latency.
\end{abstract}

\maketitle
\pagestyle{plain}

\section{Introduction} \label{sec:intro}

Approximate nearest neighbor (ANN) search powers many AI systems, including retrieval-augmented generation (RAG) \cite{rag, rag-text-gen-survey}, search~\cite{colbert, passage-reranking-bert}, and recommendation~\cite{collaborative-filtering, graph-cnn-recommender}.
As these systems increasingly run on third-party infrastructure and operate over sensitive data, privacy has become a critical requirement.
For example, systems such as ChatGPT serve hundreds of millions of users and perform RAG queries over their highly sensitive chat messages~\cite{chatgpt-memory}.

A natural first step to protect the search index and query contents is to encrypt them.
However, it is well-known from over a decade of leakage-abuse attacks~\cite{islam2012access, cash2015leakage, zhang2016leakage, grubbs2017leakage, blackstone-attack, oya2022ihop, nie2024jigsaw, jia2025fit} that encryption alone is not enough: a service provider observing the sequence of accesses during query processing can infer information about the query and the dataset, and sometimes, recover the dataset entirely~\cite{lambregts2022val}.
Cryptographic approaches such as oblivious RAM (ORAM)~\cite{goldreich-ostrovsky} hide this leakage, but remain too expensive for ANN search. Given a 10M-vector index at $90\%$ recall, the state-of-the-art solution Compass~\cite{compass} serves $100$ queries/\$ and incurs over 1~s of latency, far from the under-20~ms latencies expected in production deployments~\cite{azure-cosmos-db-vector}.

To avoid the high overhead of purely cryptographic approaches, a practical alternative is to rely on hardware assumptions and run ANN search inside a trusted execution environment (TEE)~\cite{amd-sev-snp, intel-tdx}, where the index is stored in the protected TEE memory.
However, keeping the entire index in enclave memory quickly becomes expensive at scale~(\sectionref{ann-usecases}).
In practice, cost-efficient ANN systems store the index on SSDs~\cite{diskann, spann, starling, fusionanns, fresh-diskann, bang}, but this storage typically lies outside the TEE trust boundary, introducing access pattern leakage through disk I/O (\sectionref{threat-model}).

\textbf{ORAM-in-TEE for disk-oblivious accesses.}
In principle, the solution is simple: run Compass~\cite{compass}, the state-of-the-art ORAM-ANN system, inside a TEE to hide disk access patterns.
However, this approach is unsuitable for the disk setting as it places \emph{heavy demands on SSD I/O}, ultimately requiring more compute and SSD resources and driving up cost.
For instance, sustaining $\approx 50$~QPS on a 20M $3$~KB vector index at 90\% recall requires at least 4~vCPUs and 8~SSD units, and still incurs $\approx 160$~ms latency~(\sectionref{eval-compass-comparison}).

\begin{figure}[t]
    \centering
    \includegraphics[width=0.95\columnwidth, trim={1.05cm, 48.5cm, 52cm, 0cm}, clip]{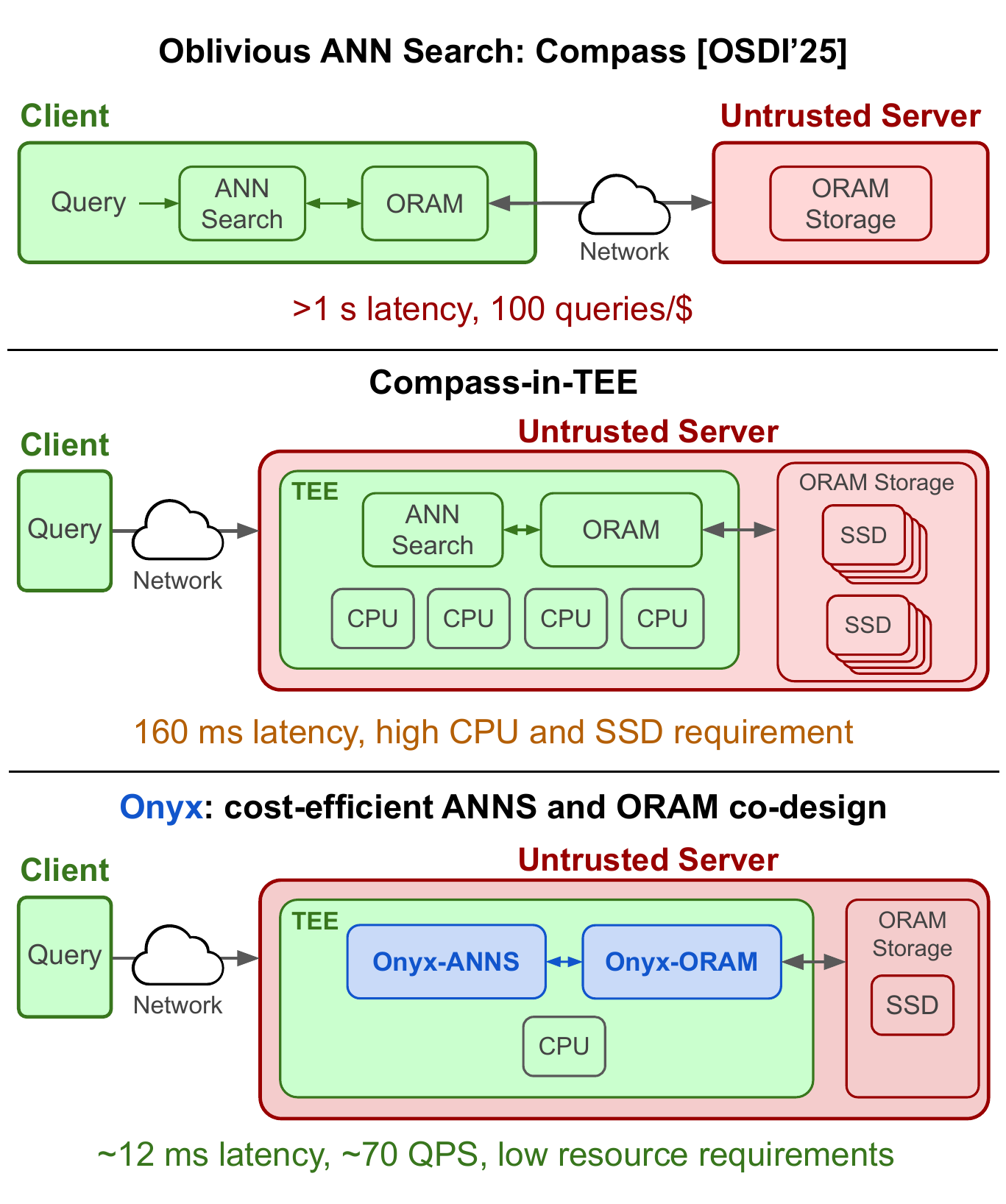}
    \caption{Prior designs for oblivious ANNS and \toolname. Compass~\cite{compass} suffers from poor performance due to high network overhead, and mirroring its design into a TEE yields a system that needs significant CPU and SSD resources to sustain reasonable performance. \toolname proposes co-designed primitives for ANNS and ORAM that jointly improve resource utilization, yielding high performance and low cost.}
    \label{fig:intro-fig}
\end{figure}

To address this, we propose \toolname, a fundamentally different ANN-ORAM co-design \emph{for the disk setting} that significantly reduces this overhead through efficient utilization of SSD resources (see \figureref{intro-fig} for an overview).
\toolname achieves up to $10\times$ better cost-efficiency than the Compass-in-TEE approach; with a single vCPU and a single SSD unit on the same index, it sustains $\approx 70$~QPS at $12$~ms latency.

\textbf{Inverting the ORAM-ANN co-design.}
The core challenge for efficient oblivious ANN search over SSDs is balancing \emph{SSD bandwidth} and \emph{SSD access count}, since both directly determine performance and resource cost.
Prior ORAM-ANN co-design~\cite{compass}, optimizing for oblivious search over a network, combines an access-count-efficient ANN (reducing network round-trips) with a bandwidth-efficient ORAM (reducing communication bandwidth), but each primitive aggressively trades off the other resource. The ANN layer fetches larger blocks to reduce accesses, wasting bandwidth due to the inherently low spatial locality of ANN search~\cite{starling, diskann++}, while the ORAM layer makes many fine-grained accesses to reduce data movement but is ultimately limited by fundamental lower bounds to ORAM bandwidth~\cite{oram-lower-bound, oram-lower-bound-all-params}.

In this work, we invert this design, arguing that both ORAM and ANN are better-suited for their new role.
ANN search is naturally approximate, so it can maintain high recall with smaller, lower-precision representations, supporting aggressive bandwidth reduction.
ORAM has no fundamental lower bounds on access count, allowing it to significantly reduce the number of I/O operations required per access.

\textbf{New primitives for our co-design.}
Existing primitives fall short of realizing the optimization objectives of our design.
ANN search algorithms for disk-hosted indices are designed assuming bandwidth is plentiful, which is true for plain SSDs but not for ORAM over SSDs.
On the ORAM side, access-efficient designs such as EnigMap~\cite{enigmap} introduce high bandwidth amplification, negating the ANN-side savings.
To address both, we propose two new co-designed primitives.
\textbf{\toolname-ANNS}~(\sectionref{onyx-anns}) retains the low memory footprint and accesses required for disk-hosted indices, while introducing a compact intermediate representation that proactively prunes the majority of bandwidth-intensive full-precision accesses without hurting recall. This reduces ANN bandwidth by up to $5\times$ while increasing access count by at most $15\%$ compared to the state-of-the-art~\cite{diskann} disk-based ANN search~(\sectionref{onyx-anns-overview}).
\textbf{\toolname-ORAM}~(\sectionref{onyx-oram}) proposes a locality-aware shallow tree design that significantly reduces access count while remaining compatible with bandwidth-efficient ORAM techniques. Compared to EnigMap-style access-optimized PathORAM~\cite{enigmap}, our design achieves $2\times$ lower access count and $4.6\times$ lower bandwidth simultaneously~(\sectionref{onyx-oram-overview})
Together, the two primitives achieve a much better balance of SSD bandwidth and access count than any prior combination.

\textbf{Evaluation summary.}
We implement and evaluate \toolname in \sectionref{evaluation}.
Overall, \toolname is consistently the most cost-efficient and lowest-latency approach for TEE-based oblivious ANN search at scale. Compared to Compass-in-TEE, it achieves $1.7$--$9.9\times$ lower cost and $2.3$--$12.3\times$ lower latency across all datasets~(\sectionref{eval-end-to-end}).
We also evaluate \toolname against other combinations of state-of-the-art ORAM and ANN
techniques. Compared to the best of these baselines, \toolname achieves $2.6$--$4.2\times$
lower cost and $2.5$--$4.5\times$ lower latency.
Given a resource partition with just 1~vCPU, 4~GB RAM, and 187.5~GB SSD, \toolname can host a 64~GB index and sustain 70~QPS at 12~ms latency with 90\% top-$10$ recall, serving over 8M queries/dollar.

\section{Background}\label{sec:background}
We provide background on our setting below. Details on the specific ORAM and ANN primitives we build on are deferred to the sections where we discuss our designs (\sectionref{onyx-oram} and \sectionref{onyx-anns}).

\subsection{Motivation: Private Disk-hosted ANN Search} \label{sec:ann-usecases}
We discuss two deployment scenarios where cost and scalability necessitate private disk-hosted ANN indices.

\textbf{Scenario I: Enterprise Knowledge Bases.} \label{sec:motivation-enterprise}
This scenario targets enterprise applications (e.g., RAG for internal semantic search) hosted on third-party clouds~\cite{azure-cosmos-db-vector, zilliz-milvus, pinecone, weaviate}. To protect proprietary data while leveraging cloud scalability, organizations use Confidential VMs~\cite{azure-confidential-computing, gcp-confidential-computing}.
The primary motivation for disk-based indexing is \emph{cost-efficiency}.
Applications that do not require very high throughput (e.g. 100 QPS) often cannot justify the expense of keeping the entire index in memory.
For example, a single 20M-vector partition with 3~KB vectors~\cite{wiki-cohere} requires $64$~GB DRAM to host in memory, but only $\approx 2$~GB memory with a disk-hosted index~(\sectionref{eval-datasets}).

\textbf{Scenario II: Personalized Data Stores.} \label{sec:motivation-personal}
This scenario targets multi-tenant services where each user maintains a private, isolated index.
A key use case is private memory for the next generation of personalized AI agents (e.g., Microsoft Recall~\cite{microsoft-recall}, Gemini Personal Intelligence~\cite{gemini-personal-intelligence}), where highly sensitive personal information such as browsing history, messages, and documents is consolidated into per-user indices.
The combination of privacy, low latency, and scalability makes disk-hosted indices inside a TEE a natural fit.
Existing privacy-preserving solutions such as Compass~\cite{compass} rely on client-side processing, incurring high-latency network round trips.
To achieve the interactive latencies required by real-time agents, confidential computing within clouds offers a natural solution.
In this architecture, keeping the full indices of millions of users constantly in cloud DRAM is economically infeasible.
A disk-hosted index only needs a small in-memory cache per user to serve low-latency requests: for example, a 1M-vector index~\cite{sift} requires only 8~MB of in-memory hints instead of the 640~MB index~(\sectionref{eval-datasets}).

\subsection{Disk Access Pattern Leakage}\label{sec:attack-mechanism}
We focus on the direct leakage from disk I/O during ANN search on disk-hosted indices.
The SSD typically lies outside the TEE trust boundary, and the host software stack can directly observe the sequence of disk accesses issued during query execution.
Even if the index and query contents are encrypted, the access pattern alone can reveal both.

Over a decade of research on leakage-abuse attacks~\cite{islam2012access, cash2015leakage, zhang2016leakage, grubbs2017leakage, blackstone-attack, damie2021refined, oya2022ihop, lambregts2022val, nie2024jigsaw, jia2025fit} has shown that such access patterns enable highly effective query recovery, and from that, reconstruction of the dataset itself.
Even in a passive setting, where the attacker only observes access patterns, the Refined Score attack~\cite{damie2021refined} recovers $\approx$85\% of queries starting from just $\approx$10 known queries, and optimization-based attacks such as IHOP~\cite{oya2022ihop} and Jigsaw~\cite{nie2024jigsaw} achieve near-perfect recovery using only an auxiliary dataset (e.g., a public corpus approximating the target data).
Active attacks amplify the threat: because indices ingest data from many sources (e.g., emails, shared documents), an adversary can inject carefully crafted entries that encode identifiers into the access pattern so that each future query produces a uniquely decodable response.
File-injection attacks~\cite{zhang2016leakage, blackstone-attack} show that tracking $n$ identifiers requires only $O(\log n)$ injected entries.

While majority of these attacks were developed for keyword search, recent work \textsc{FiT}~\cite{jia2025fit} focuses on access pattern leakage during semantic search, showing that page-level accesses (exactly the granularity leaked in our setting) suffice to infer the specific vectors fetched within each page for graph-based ANN indices like HNSW.
Since ANN search is typically followed by a retrieval of the corresponding data for the $K$ nearest vectors, the adversary can learn both the query's neighborhood in embedding space and the returned documents.
Encrypting the index is insufficient; a secure disk-based ANN search must make query-oblivious accesses to the untrusted storage, which is the goal of our work.

\section{System Overview} \label{sec:system-overview}
\toolname is a privacy-preserving semantic search system that allows querying a database while keeping the queries and the dataset private.
To protect privacy, it executes within a trusted execution environment (TEE), keeping the core logic and data in TEE-protected memory, and for cost efficiency~(\sectionref{ann-usecases}), it stores the index outside the TEE trust boundary on untrusted external storage (e.g., NVMe SSDs).

\paragraph{Design overview.}
\toolname consists of two co-designed components running inside the TEE trust boundary: a \emph{bandwidth-efficient} ANN search engine (\toolname-ANNS) and an \emph{access-count-efficient} ORAM client (\toolname-ORAM).
\toolname-ANNS introduces compact \emph{pruning hints} and a three-step granular refinement strategy~(\sectionref{onyx-anns-overview}) that ensures the majority of external storage accesses fetch only small blocks (adjacency lists and pruning hints, 256--512 bytes each), with only a few accesses fetching the larger full-precision embedding vectors.
These accesses are mediated through two separate \toolname-ORAM instances: one for the small traversal blocks and one for the larger refinement blocks.
\toolname-ORAM uses a locality-aware shallow tree design~(\sectionref{onyx-oram-overview}) that balances both access count and bandwidth to achieve high throughput at the traversal block sizes.
The two components reinforce each other: since ORAM performance is sensitive to block size, \toolname-ANNS's smaller blocks directly reduce the \toolname-ORAM's per-access overhead,
and \toolname-ORAM achieves its greatest throughput advantage precisely at the 256--512 byte block sizes that dominate \toolname-ANNS accesses~(\figureref{access-vs-bandwidth-oram}).

\paragraph{Operations.}
Within the broader semantic search pipeline, which consists of both an embedding model to map data items to vectors and a vector database to do a similarity search over them, \toolname specifically focuses on the vector database component.
It supports the following operations over the embedding vectors:
\begin{itemize}[leftmargin=*]
    \item $\Setup(\idx)$: Given an ANN index $\idx$, initialize both ORAMs over the index and load in-memory index state.
    \item $\resultlist \gets \Search(\queryvec, k)$: Given the query embedding vector $\queryvec$, returns the keys $\resultlist$ of the $k$ nearest neighbors.
    \item $\Insert(\point, \pointvec)$: Add vector $\pointvec$ to index with key $\point$.
    \item $\Delete(\point)$: Remove the entry corresponding to key $\point$.
\end{itemize}
To avoid expensive per-block initialization, one can use prior work on bulk-loading the initial ORAM contents such as BULKOR~\cite{bulkor} and EnigMap~\cite{enigmap}.
In the rest of the paper, we primarily focus on search performance.
At the same time, the same design choices we make to improve search can also improve insertion and deletion performance (\appendixref{dynamic-index}).

\paragraph{Data layout.}
\toolname splits state across TEE-protected memory and two on-disk ORAM-protected layouts.
\begin{itemize}[leftmargin=*]
    \item \textbf{Inside TEE (protected memory).} (i) ANN metadata and the start node for graph traversal, (ii) following prior work~\cite{diskann, compass}, highly compressed in-memory traversal hints $\traversalhints$ for approximate distance computation during search, and (iii) ORAM client state for both ORAM instances (e.g., position maps, stashes, keys).
    \item \textbf{Outside TEE (external storage, untrusted).} The full index is stored within two separate ORAM instances:
    (i) a \emph{traversal ORAM} holding per-node blocks consisting of adjacency lists $\adjacencylists$ concatenated with compact pruning hints $\pruninghints$, and
    (ii) a \emph{refinement ORAM} holding full-precision embedding vectors $\fullprecision$.
\end{itemize}

\paragraph{Paper organization.}
We present \toolname-ORAM in \sectionref{onyx-oram} and \toolname-ANNS in \sectionref{onyx-anns}, starting with ORAM because ANNS builds on top of it.
Both sections present: background on their base primitive~(\sectionref{background-oram}, \sectionref{background-ann}), design overview~(\sectionref{onyx-oram-overview}, \sectionref{onyx-anns-overview}), construction~(\sectionref{onyx-oram-construction}, \sectionref{onyx-anns-construction}), and I/O analysis~(\sectionref{oram-analysis}, \sectionref{onyx-anns-analysis}).
We evaluate \toolname in \sectionref{evaluation} and discuss related work in \sectionref{related-work}.

\subsection{Threat Model}\label{sec:threat-model}

\paragraph{Trusted computing base.}
Following a common TEE threat model~\cite{intel-tdx, amd-sev-snp}, we assume that the CPU package is trusted and functions correctly.
The application code running inside the TEE is also assumed to be correctly implemented and free of software vulnerabilities.
All data residing within the TEE's private memory is protected by hardware-level encryption and isolation; DRAM access patterns are out of scope (see scope and exclusions below).

\textbf{Adversary capabilities.}
Like prior TEE systems~\cite{weave, varys, aex-notify, heisenberg-defense, opaque, haven, vc3, scone, ryoan, enclavedb, safebricks, occlum}, we assume that the adversary does not compromise the TEE but can observe and manipulate its interactions with untrusted interfaces like storage.
Our security guarantees hold even if the adversary knows the stored vector index and the distribution of user queries.
Since the TEE communicates with an \emph{untrusted} disk via shared memory (e.g., bounce buffers~\cite{bounce-buffer}) visible to the hypervisor, the adversary can \emph{directly observe} the disk access pattern generated during query execution.
It can also manipulate the responses returned from disk unless prevented by integrity checks.
Its goal is to infer information about the query/dataset from the observed disk access pattern.

We focus on disk access patterns because disk-based storage is essential for cost efficiency and scalability~(\sectionref{ann-usecases}), and even a perfect TEE cannot prevent this leakage: an untrusted disk inherently lies outside the trust boundary, and a cloud provider can trivially inspect the I/O trace.
As discussed in \sectionref{attack-mechanism}, the observed access pattern can reveal the exact top-$K$ results for a given query, making disk access obliviousness a first-order privacy requirement for deploying cost-efficient ANN search on third-party cloud providers.

\textbf{Scope and exclusions.}
A compromised TEE is out of scope for our work.
This includes cache-based side channels~\cite{yarom2014flushreload, cachezoom}, branch prediction attacks~\cite{lee2017branchshadowing, branchscope}, controlled-channel attacks via page faults and interrupts~\cite{xu2015controlled, vanbulck2017sgxstep, tdxdown}, transient execution attacks~\cite{meltdown, spectre}, software-based fault injection~\cite{cachewarp, rowhammer, plundervolt}, timing and performance counter leakage~\cite{t-time, countersev}, physical attacks~\cite{lee2020membuster, teefail, wiretap, batteringram, badram}, remote-attestation attacks~\cite{teefail, wiretap, batteringram}, and denial-of-service attacks.
These are the subject of an orthogonal line of work on improving enclave implementations and mitigations.
For example, controlled-channel attacks can be mitigated by hardware-assisted interrupt protection~\cite{aex-notify, heisenberg-defense, tlblur}, and cache-based leakage by partitioning and core isolation~\cite{composable-cachelets, varys}.
Unlike these enclave implementation issues, disk access patterns arise outside the enclave and cannot be mitigated by improved design---we therefore focus exclusively on disk access pattern leakage.
We also assume that an orthogonal rollback protection mechanism (e.g., monotonic counters or distributed freshness protocols~\cite{matetic2017rote, strackx2016ariadne, nimble}) prevents the adversary from replaying stale state from the persistent storage.
The query type is public and not hidden by our system.

\subsection{Formalizing Security}
We formalize the privacy guarantees of \toolname\ through an indistinguishability-based security game.
We give a high-level description here and defer the full specification to \figureref{disk-security-game} (\appendixref{security-proof}).
Let $\searchsystem$ denote a disk-resident ANN search system that executes within a TEE and accesses an untrusted storage interface through ORAM, with public parameters $\publicparams$.
At a high level, the challenger $\challenger$ executes the protocol and maintains the private state of the system, which remains hidden from the adversary since TEE side channels are out of scope, while the adversary $\adversary$ plays the role of the malicious host and untrusted storage.
Whenever the protocol issues an external storage access, the request is sent to $\adversary$, which observes the access trace and can return an arbitrary response.
The adversary adaptively chooses two equal-size indices and a sequence of paired operations (constrained to have matching types), and wins if it can distinguish which of the two worlds the challenger is executing.
If the challenger aborts at any point, the adversary loses.

\begin{definition}
    We say that a disk-resident ANN search system $\searchsystem$ provides \emph{disk-access privacy} if no non-uniform probabilistic polynomial-time adversary $\adversary$ can win the security game in \figureref{disk-security-game} (\appendixref{security-proof}) with probability non-negligibly higher than random guessing.
\end{definition}

\begin{theorem}
\label{thm:disk-privacy}
    \toolname is disk-access private when instantiated with a secure authenticated encryption scheme~\cite{authenticated-encryption}.
\end{theorem}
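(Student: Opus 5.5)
We prove \theoremref{disk-privacy} with a hybrid argument. The idea is to show that the external access trace of \toolname is a function only of public information: the public parameters $\publicparams$, the index size, and the sequence of operation types. It does not depend on the private contents of the indices or on the query and point arguments. Concretely, we move from the real game with challenge bit $b$ to a simulated game in which every ciphertext the challenger writes is an encryption of zeros and every ORAM path or bucket access is chosen using fresh randomness independent of $b$. We then show that each step changes $\adversary$'s winning probability by at most a negligible amount.

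\textbf{Step 1: fix the number and shape of ORAM accesses.} We first argue that, for each operation type, the number of accesses \toolname-ANNS issues to the traversal ORAM and to the refinement ORAM, and the block sizes used, are fixed by $\publicparams$. This covers the candidate list size $\candidatelistsize$, the pruned list size $\prunedlistsize$, the fixed number of refinement fetches, and padding with dummy accesses, so the counts are independent of $\queryvec$, of $\idx$, and of the pruning outcomes. Likewise, $\Insert$ and $\Delete$ issue a padded, fixed number of accesses. This step relies on the construction of \toolname-ANNS padding its three-step refinement to worst-case budgets. \emph{This is the step I expect to be the main obstacle}, because pruning is data-dependent. I would handle it by checking that every loop in the search, traversal, and pruning algorithms runs for a public number of iterations, and that any early termination is replaced by dummy ORAM reads.

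\textbf{Step 2: obliviousness of \toolname-ORAM.} Next we show that, given a fixed-length sequence of logical accesses, the physical trace of \toolname-ORAM has a distribution independent of the logical addresses. The shallow-tree construction, like PathORAM, assigns each block a fresh uniformly random leaf (and any slot or level in the position map) that is independent of past accesses. Locality-aware placement and eviction depend only on these random leaves and on public counters. Each physical access therefore reads a path or set of buckets determined by a leaf that is uniform from $\adversary$'s view. Stash overflow occurs with negligible probability by the stash analysis, and we make the challenger abort-free except on that event. The two ORAM instances use independent randomness, so their traces compose.

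\textbf{Step 3: encryption and integrity.} Finally, by the IND-CPA security of the authenticated encryption scheme, we replace all written ciphertexts with encryptions of zeros. Each ciphertext is bound to its location and version via $\oramaad$ and $\oramver$. By INT-CTXT, any response $\adversary$ modifies fails decryption and the challenger aborts, which by definition makes the adversary lose. The challenger's subsequent behavior therefore never depends on adversarially chosen plaintexts, and $\adversary$'s active control over disk responses gives no advantage beyond the abort. Rollback is excluded by assumption.

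Chaining the hybrids, the final game is independent of $b$. The overall advantage is bounded by the AE advantages plus the negligible stash-overflow probability, which establishes disk-access privacy.
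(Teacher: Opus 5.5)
Your overall structure matches the paper's proof: logical access counts fixed by $\publicparams$, ORAM obliviousness, an INT-CTXT abort hybrid, and an IND-CPA swap to zero-encryptions. But Step~2 has a real hole. You argue obliviousness as if \toolname-ORAM were PathORAM, i.e., each access ``reads a path or set of buckets'' determined by a uniform leaf. In \toolname-ORAM, which is built on RingORAM, \textsc{ReadPath} reads a \emph{single slot} per bucket, and the physical offset of that slot is visible to $\adversary$. A uniform leaf says nothing about these offsets. If, say, real blocks sat at fixed offsets, the adversary would learn at which level the target block lived and could link accesses.

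The missing argument, which the paper gives explicitly, has three parts:
(i) each bucket's logical-to-physical map $\oramptrs$ is a fresh secret random permutation, drawn on every \textsc{WriteBucket} and kept inside the TEE;
(ii) between two rewrites each logical slot is read at most once (the real slot once, after which the block moves to the stash, and dummies in order via $\dummycount$), and at most $S$ reads occur before an early reshuffle, so the observed offsets in a bucket are a uniformly random sequence of distinct offsets regardless of which read hit the real block;
(iii) the early-reshuffle trigger depends only on $\oramcount$, which is public because every access touches every bucket on its path exactly once, and eviction follows the public counter $G$.
Relatedly, your claim that the slot and level stored in $\posmap$ are freshly random is incorrect. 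They are set deterministically during eviction, and it is $\oramptrs$ that hides them.

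A second, smaller issue is the order of your encryption hybrids. The INT-CTXT step must come first. The challenger switches to accepting only the exact ciphertext it wrote for $(b,k,\oramver)$ and serves reads from its own stored plaintexts. Only then can you invoke IND-CPA to replace payloads with zeros. Otherwise the reduction would have to decrypt adversary-supplied ciphertexts without the key, and the challenger could not continue the protocol once zeros are on disk. This is why the paper orders them as $H_1$ then $H_2$, with the challenger maintaining the tree internally.

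Two of your concerns are unnecessary here, though harmless. The padding in Step~1 is not needed, since \algorithmref{onyx-anns-algo} makes exactly $\candidatelistsize$ traversal reads and $\prunedlistsize$ refinement reads by construction, and pruning is purely in-memory. The stash-overflow event is also not needed, since the stash lives in TEE memory and never affects the disk trace.
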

\begin{proofsketch}
The ORAM and ANN search parameters are configured statically and are public.
The ANN operation type is public, and the number and granularity of (logical) disk accesses made by each ANN operation depends only on these public parameters~(\sectionref{onyx-anns-construction}).
Each (logical) disk access is mediated through ORAM, which guarantees data-oblivious accesses with confidentiality and integrity given authenticated encryption~\cite{authenticated-encryption}.
\toolname-ORAM builds on RingORAM~\cite{ringoram}, and the changes we introduce~(\sectionref{onyx-oram-construction}) preserve all the protocol invariants and the security guarantees.
Thus, either the adversary deviates from the protocol and the challenger aborts, or the two traces seen by the adversary are indistinguishable.
We provide the full proof in \appendixref{security-proof}.
\end{proofsketch}

\section{\toolname-ORAM} \label{sec:onyx-oram}

\subsection{Background: RingORAM}\label{sec:background-oram}
\toolname-ORAM builds on RingORAM~\cite{ringoram}, a tree-based ORAM that hides access patterns to untrusted storage.
We use the non-recursive variant (the TEE holds the position map) without the XOR trick (the SSD has no compute capability).

For $N$ blocks, RingORAM organizes storage as a binary tree over buckets with depth $L = \Theta(\log{N})$.
Each block is mapped by a position map $\posmap$ to a random leaf and resides either on that root-to-leaf path or in a client-local stash.
Every $A$ accesses, a background \emph{eviction} process flushes stash blocks along a deterministic path that cycles through the tree, while ensuring each block remains on its assigned path.
Each bucket holds at most $Z$ real blocks, and is padded to $Z+S$ blocks with dummies. A bucket can be accessed at most $S$ times before it must be fetched, reshuffled, and rewritten (\emph{early reshuffle});
$S$ is configured so that eviction reshuffles the bucket with high probability before this limit is reached.

\textbf{Access.}
To read a block, the client first checks the stash, then fetches a root-to-leaf path (the block's assigned path, or a random path if the block was in the stash).
For each bucket on the path, it reads a small metadata header to determine which block to read (the real block if present, otherwise an unread dummy).
The accessed block is then added to the stash, remapped to a fresh random path, and returned.

\textbf{Eviction.}
Every $A$ accesses, eviction reads $Z$ random blocks per bucket on the eviction path (since buckets have at most $Z$ real blocks), assigns stash blocks to eviction buckets, adds dummies, shuffles and encrypts the buckets, and then writes back the full $Z+S$ buckets to the tree.

\textbf{I/O multipliers.}
The dominant cost for ORAM over SSD is SSD I/O, which comes from two sources: the \emph{access count multiplier} (I/O requests per logical access) and the \emph{bandwidth multiplier} (bytes transferred per logical access).
Ignoring the rare early reshuffles, RingORAM's I/O multipliers are:
$\AccessCount \approx \left(2 + \frac{Z + 1}{A}\right)\cdot L$ and $\Bandwidth \approx \left(1 + \frac{2Z+S}{A}\right)\cdot L$.
The leading $2$ in the access count comes from the per-level metadata fetch. The $(Z+1)/A$ term reflects eviction access count amortized over $A$ accesses.

\subsection{Design Overview} \label{sec:onyx-oram-overview}

\paragraph{Goal: access-efficient ORAM}
Recall that our co-design requires an access-count-efficient ORAM.
This, however, should not sacrifice bandwidth efficiency.
For the block sizes typical of our bandwidth-efficient ANN search (256--512 Bytes), both multipliers contribute significantly~(\sectionref{oram-analysis}), and our ORAM must balance both.

\textbf{Tension between access count and bandwidth.}
To hide access patterns over $N$ blocks, state-of-the-art tree-based ORAMs~\cite{ringoram, pathoram} make random accesses to $O(\log N)$ tree levels per logical access.
This is the core bottleneck for access count, since the remaining ORAM operations (e.g., evictions) can be made deterministic and sequential (\sectionref{onyx-oram-construction}, step 2).
To reduce this, EnigMap~\cite{enigmap} proposed packing $\ell$ tree levels into a single SSD page for PathORAM~\cite{pathoram}, thereby reducing tree depth and access count by $\ell\times$.
This, however, leads to high bandwidth: packing increases bandwidth by $\frac{2^{\ell} - 1}{\ell}\times$, and the design is incompatible with the large buckets used by bandwidth-efficient ORAMs like RingORAM.

\textbf{Our approach: locality-aware shallow trees.}
In \toolname-ORAM, we propose reducing tree depth directly by using a \emph{$d$-ary tree}, which cuts depth and access count by $\ell = \log_2 d\times$ while remaining compatible with large buckets.
$d$-ary trees were introduced over a decade ago~\cite{gentry-oram} but fell out of use: although they reduce tree depth, they increase the frequency of bandwidth-heavy evictions by $(d{-}1)\times$, growing bandwidth by roughly $d/\log d\times$ and incurring a similar kind of trade-off as EnigMap's level packing.

For oblivious accesses to SSDs, however, we observe that this design point becomes attractive for large-bucket ORAMs like RingORAM.
RingORAM uses large buckets with many dummy slots to allow fetching individual blocks rather than entire buckets, saving significant bandwidth.
The key observation is that in a $d$-ary tree, although evictions are $(d{-}1)\times$ more frequent, the number of dummy slots per bucket decreases proportionally (since fewer accesses occur between evictions), and the two effects balance each other out.
This is especially prominent with our locality-aware optimizations, which require fetching even more dummies to keep the evictions sequential.
For $d = 8$, this effect is so prominent that the $d$-ary tree ends up not increasing eviction bandwidth at all while reducing access count by $3\times$~(\sectionref{onyx-oram-construction}, step 3).
Additionally, the $d$-ary tree with fewer dummies reduces storage amplification by $4\times$, further improving cost-efficiency.
\tableref{oram-overhead-breakdown} summarizes the I/O multipliers of \toolname-ORAM vs prior work: \toolname-ORAM improves PathORAM $(\ell = 3)$ access count by $2\times$ while reducing its bandwidth by $4.6\times$.

\subsection{Construction} \label{sec:onyx-oram-construction}

\begin{table}[t]
\centering
\small
\begin{tabular*}{\columnwidth}{@{\extracolsep{\fill}}lcc}
\toprule
 & \textbf{Access count} & \textbf{Bandwidth} \\
\midrule
\vspace{5pt}
RingORAM & $\approx \frac{5}{2} \cdot \log(N)$ & $\approx 3 \cdot \log(N)$ \\
\vspace{5pt}
PathORAM $(\ell=2)$ & $\approx \log(N)$  & $\approx 12 \cdot \log(N)$ \\
\vspace{5pt}
PathORAM $(\ell=3)$ & $\approx \frac{2}{3} \cdot \log(N)$  & $\approx 18.5 \cdot \log(N)$ \\
\toolname-ORAM & $\approx \frac{1}{3} \cdot \log(N)$ & $\approx 4 \cdot \log(N)$ \\
\bottomrule
\end{tabular*}
\caption{I/O multipliers for \toolname-ORAM vs prior work. PathORAM $\ell$: levels packed per page. RingORAM bandwidth shown is the best case, achieved in the large-bucket limit.}
\label{tab:oram-overhead-breakdown}
\end{table}

Recall the access count multiplier for RingORAM from \sectionref{background-oram} is $\AccessCount \approx \left(2 + \frac{Z + 1}{A}\right)\cdot L$,
where $Z$ is the bucket capacity, $A$ is the eviction frequency, and $L = \log_2 N$ is the tree depth.
We reduce this multiplier to $\approx L/3$ in \toolname-ORAM, and we do so in two stages.
First, we apply \emph{locality-aware optimizations} that eliminate the multiplicative factor before $L$, reducing the per-level cost from $2 + (Z+1)/A$ down to $1$, i.e., a single block read per level.
Second, we reduce the tree depth itself from $L = \log_2 N$ to $\approx L/3$ by switching to a wider, shallower tree.
Crucially, the depth reduction is only effective once the locality-aware optimizations are in place and it is particular well-suited for bandwidth-efficient ORAMs with large buckets.
We describe these changes as a sequence of steps, each stacking on the previous one, and then discuss how the three steps reinforce each other to yield a much better trade-off than any one in isolation.
\algorithmref{oram-protocol} in \appendixref{oram-protocol} gives the full protocol.

\textbf{Step 1: Store bucket metadata locally and increase bucket size.}
In RingORAM, each access fetches small per-bucket metadata at every level of the path to determine which block within the bucket to read~(\sectionref{background-oram}).
This adds one I/O per level, accounting for the leading $2$ (instead of $1$) in the access count expression, and also limits bucket size to small values (e.g., $Z = 32$, $128$) for smaller blocks because metadata storage grows with $Z$.
\toolname-ORAM stores this metadata locally to avoid a per-level metadata fetch.
This requires at least 10 bytes of memory per block, which can more than double the ORAM memory footprint~(\appendixref{oram-memory-storage}); as we will see, our later design choices substantially minimize this overhead.
With the metadata cost removed, we are free to use much larger buckets, which in turn allows a proportionally larger eviction period $A$.
The larger $A$ better amortizes eviction cost, and together these changes reduce the access count multiplier to $\AccessCount \approx (1 + (Z+1)/A) \cdot L$.

\textbf{Step 2: Fetch entire buckets during evictions.}
RingORAM's eviction reads only $Z$ blocks from each bucket on the eviction path (as opposed to $Z+S$), as there could be at most $Z$ real blocks in a bucket.
Since $S$ is typically $\approx 2Z$, this saves significant bandwidth, but it means the eviction must issue $Z$ separate reads per bucket, which increases access count.
\toolname-ORAM instead reads each bucket in full ($Z + S$ slots) as a single sequential I/O.
This trades additional bandwidth for far fewer accesses, reducing the access count multiplier to $\AccessCount \approx (1 + 2/A) \cdot L \approx 1 \cdot L$ since $A$ can be set large enough to make the eviction term negligible.

\textbf{Step 3: Use a shallower tree.}
\toolname-ORAM replaces the binary tree with a $d$-ary tree ($d > 2$), reducing depth from $\log_2 N$ to $\log_d N$ and the access count by a factor of $\log_2 d$.
The trade-off is more frequent evictions: we generalize RingORAM to $d$-ary trees and found that evictions must be $(d{-}1)\times$ more frequent to keep the stash bounded~(\sectionref{oram-analysis}).
Since the tree has $\log_2 d$ fewer levels, one would na\"ively expect eviction bandwidth to increase by $(d{-}1)/\log_2 d$, or about $2.3\times$ for $d = 8$.
For RingORAM's large buckets, however, reducing $A$ has the benefit of proportionally reducing $S$ and the dummy padding, which added significant bandwidth pressure.
This offsets the higher eviction frequency, reducing the net bandwidth increase to roughly $\frac{d+1}{3\log_2 d}\times$, which for $d = 8$ is $\approx 1$.
As a result, this step yields a $3\times$ access count reduction with essentially no bandwidth penalty: $\AccessCount \approx \frac{L}{3}$.

The three steps above are not independent; they reinforce each other and \emph{together yield a much better trade-off than any one in isolation}:

\textbf{Step~1 enables Step~3.} A $d$-ary tree requires large $Z$ to properly amortize the more frequent evictions; for instance, with $d = 8$, the ratio $Z/A$ is $6.4$ for $Z = 32$ but drops to $4.13$ for $Z = 256$. Large buckets are efficient because Step~1 eliminates the per-access metadata overhead that previously scaled with $Z$.

\textbf{Step~2 enables Step~3.} Without full-bucket reads, evictions would still issue $Z$ individual reads per bucket, and the $Z/A$ access count term (which increased $\frac{d-1}{\log{d}}\times$) would make the access count worse not better.

\textbf{Step~3 benefits Step~1.} The reduced padding and fewer buckets in a $d$-ary tree shrink the otherwise high per-block metadata memory footprint: the overhead of storing bucket metadata locally drops from 10 bytes/block ($d = 2$) to just 2.5 bytes/block ($d = 8$). For the same reasons, ORAM storage amplification is also reduced by $4\times$~(\appendixref{oram-memory-storage}).

\textbf{Step~3 benefits Step~2.} Step~2 increased eviction bandwidth from $(2Z{+}S)/A$ to $(2Z{+}2S)/A$ per level, making bandwidth more sensitive to the dummy padding $S$. In a $d$-ary tree, $S$ shrinks proportionally with $A$, which reduces this pressure and actually ends up decreasing eviction bandwidth by $\approx 1.1\times$ when we go from $d = 2$ to $d = 8$.

\begin{figure}[t]
    \centering
    \includegraphics[width=0.85\columnwidth]{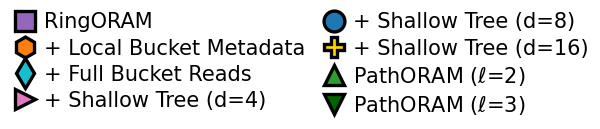}\\
    \includegraphics[width=0.85\columnwidth]{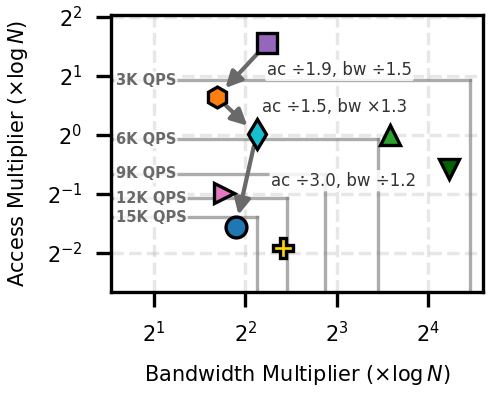}
    \caption{I/O overhead design space for ORAM schemes at $B = 512$ Bytes. Gray iso-throughput contours show the maximum I/O multipliers that our evaluation SSD~(\sectionref{eval-setup}) can theoretically sustain at a given throughput. Arrows trace \toolname-ORAM's step-by-step improvements from RingORAM, annotated with the factor change in each multiplier.}
    \label{fig:access-vs-bandwidth-oram}
\end{figure}

\subsection{Analysis} \label{sec:oram-analysis}

\paragraph{Performance analysis.}
\figureref{access-vs-bandwidth-oram} shows the maximum analytical throughput each ORAM design can sustain on our evaluation SSD with $512$-byte blocks, computed from the I/O multipliers and the SSD's bandwidth and IOPS ceilings (see~\sectionref{eval-oram} for throughput results with actual implementation).
RingORAM and PathORAM are each limited by one of the two multipliers (RingORAM by access count, PathORAM by bandwidth) and both lie beyond the $6$K QPS zone.
\toolname-ORAM ($d = 8$) reduces both multipliers simultaneously, reaching the $15$K QPS zone.
This plot also shows that maintaining bandwidth is important: $d = 16$ achieves a lower access count than $d = 8$ but at higher bandwidth, and lands in the $12$--$15$K QPS zone.
This highlights the importance of balancing both multipliers, and $d = 8$ strikes a favorable trade-off for our setting.

\textbf{Eviction analysis.}
\toolname-ORAM generalizes RingORAM's binary tree to a $d$-ary tree while preserving the reverse-lexicographic eviction order and all other protocol invariants.
We state the main result here and defer the full proof, which closely follows the structure of RingORAM's stash analysis~\cite{ringoram}, to \appendixref{eviction-proof}.

\begin{theorem}[Stash overflow in $d$-ary RingORAM]\label{thm:dary-stash}
Consider a $d$-ary \toolname-ORAM with $N$ blocks, bucket capacity $Z$, eviction period $A$, and tree depth $L = \lceil \log_d \frac{2N}{A(d-1)} \rceil$.
Let $a = A(d{-}1)/2$.
If $q = Z \ln(Z/a) + a - Z - 1 - \ln d > 0$, then $\Pr[\mathit{stash} > R] \;\leq\; \frac{(a/Z)^{R}}{1 - e^{-q}}$.
\end{theorem}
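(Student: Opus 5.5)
We follow the RingORAM stash analysis, which itself reduces to the PathORAM argument, and generalize each step from binary to $d$-ary trees. The first step is a \emph{reduction to an infinite-bucket ORAM}. Let $\infty$-ORAM be the idealized scheme with identical randomness and identical reverse-lexicographic eviction schedule, but with unbounded buckets. After each eviction we post-process its state greedily, bottom-up, so that every bucket holds at most $Z$ blocks; overflow spills to the parent and, above the root, to the stash. As in RingORAM, the stash of \toolname-ORAM then equals the post-processed $\infty$-ORAM stash. This holds because eviction places each block as deep as possible along its path. Step~2, reading full buckets during eviction, changes only I/O and not the placement of real blocks, so it leaves the reduction intact. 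The standard lemma then states that the stash exceeds $R$ if and only if some rooted subtree $T$ satisfies $X(T) > c(T) + R$. Here $X(T)$ is the number of real blocks whose lowest possible location (deepest common node of the block's path and the eviction path) lies in $T$, and $c(T) = Z\cdot|T|$. A union bound therefore gives $\Pr[\mathit{stash}>R] \le \sum_{T} \Pr[X(T) > Z|T| + R]$.

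The second step counts subtrees. In a $d$-ary tree, the number of rooted subtrees with $n$ nodes is at most the number of $d$-ary ordered trees, which is $\frac{1}{(d-1)n+1}\binom{dn}{n}$. A cruder bound, at most $d^{\,n}\cdot(\text{const})^n$, may suffice. We intend to arrange the estimate so that the per-node factor appears exactly as the $\ln d$ subtracted in $q$. Concretely, we combine the count with an exponential weight $e^{-q n}$ and sum the resulting geometric series $\sum_{n\ge1} e^{-q(n-1)}$. That sum produces the denominator $1/(1-e^{-q})$.

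The third step, a Chernoff bound on $X(T)$, is the crux. Fix $T$ with $n$ nodes. By the reverse-lexicographic order, over any window of $A$ accesses the eviction paths cover the $d$-ary tree evenly; consecutive evictions alternate among the $d$ children at each level. Hence the expected number of blocks whose lowest location is a given bucket at depth $i$ is bounded as in RingORAM, but with the binary factor replaced by the $d$-ary one. We expect each node of $T$ to contribute at most $a = A(d-1)/2$ blocks in expectation. The $(d-1)$ arises because, at each level, a block's path diverges from the eviction path with probability $(d-1)/d$, and the depth $L = \lceil\log_d \frac{2N}{A(d-1)}\rceil$ is chosen so that the leaves also meet this bound. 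Because the leaf assignments are independent, $X(T)$ is dominated by a Poisson-like sum with mean at most $an$. The Chernoff bound $\Pr[X > Zn+R] \le e^{-an}(ea n)^{Zn+R}/(Zn+R)^{Zn+R}$, simplified using $Zn + R \ge Zn$, yields roughly $(a/Z)^R\, e^{-n(Z\ln(Z/a) + a - Z)}$. This is where $q$ comes from: multiplying by the subtree count $\lesssim d^{\,n}$ and absorbing lower-order terms into the $-1$ gives a factor of about $e^{-qn}$ per size-$n$ subtree. Summing over $n$ then gives the claimed bound $(a/Z)^R/(1-e^{-q})$.

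The main obstacle is verifying the per-bucket expectation $a$ under the $d$-ary reverse-lexicographic schedule, especially at the leaves and in the stash-adjacent root level. We would handle it first by computing, for a bucket at depth $i$, the number of accesses since its last eviction together with the probability that a block's lowest location is that bucket, and then bounding the product uniformly by $a$. If the resulting constants do not match $q$ exactly, we would adjust the subtree-counting bound, not the Chernoff step, because that bound is where the slack ($-1-\ln d$) most naturally appears.
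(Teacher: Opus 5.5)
Your skeleton matches the paper's proof. It reduces to the $\infty$-ORAM via RingORAM's Lemmas~1--2, which do not depend on arity. It then union-bounds over rooted subtrees with the Fuss--Catalan count, applies a Chernoff bound with per-bucket mean $a$, and sums a geometric series.

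\textbf{The gap.} The only step that is actually new for $d$-ary trees is the bound $E[Y(b)]\le a=A(d-1)/2$ on the expected $\infty$-ORAM load of each bucket. You assert this bound and defer its proof. The intuition you give does not produce it. Replacing RingORAM's factor $1/2$ with a per-level divergence probability $(d-1)/d$ suggests a load of about $A(d-1)/d$, which is a factor $d/2$ too small.

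The missing idea is that the $d$ children's evictions are staggered. Take a bucket $b$ at level $i<L$, and let $m_1<\cdots<m_d$ be the last eviction times through its children $c_1,\dots,c_d$. Under reverse-lexicographic order these are consecutive evictions through $b$, so $m_{j+1}-m_j=d^i$. Consider a block remapped in the window $(m_j,m_{j+1}]$.
\begin{itemize}
\item If its new leaf lies under $c_{j+1},\dots,c_d$, it is pushed below $b$ when that child is evicted later.
\item If its new leaf lies under $c_1,\dots,c_j$, it is stuck in $b$. This happens with probability $j\,d^{-(i+1)}$.
\end{itemize}
Blocks remapped before $m_1$ have already left $b$, and blocks remapped after $m_d$ have not reached it. Each window contains $A d^i$ remaps, so
$E[Y(b)]\le\sum_{j=1}^{d-1}A d^i\cdot j\,d^{-(i+1)}=\frac{A}{d}\cdot\frac{d(d-1)}{2}=\frac{A(d-1)}{2}$.
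The load grows linearly in $d$ because the relevant span is $(d-1)d^i$ evictions, not the accesses ``since its last eviction.'' The leaf case is $E[Y(b)]\le Nd^{-L}\le a$, which is exactly what $L=\lceil\log_d\frac{2N}{A(d-1)}\rceil$ guarantees. The root is just the case $i=0$ and needs no special handling.

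\textbf{Two smaller corrections.}
\begin{itemize}
\item The subtree count needs no adjusting. The bound $\frac1n\binom{dn}{n-1}\le\binom{dn}{n}\le(ed)^n=e^{n(1+\ln d)}$ gives precisely the $-1-\ln d$ in $q$.
\item Your Chernoff simplification, as written, loses a factor. Bounding $(Zn+R)^{Zn+R}\ge(Zn)^{Zn+R}$ gives $(ea/Z)^R e^{-n(Z\ln(Z/a)+a-Z)}$, with an extra $e^R$. To get exactly $(a/Z)^R$, use the moment-generating-function bound with the fixed choice $t=\ln(Z/a)$:
$\Pr[X(T)\ge Zn+R]\le e^{(Z/a-1)an}(a/Z)^{Zn+R}=(a/Z)^R e^{-n(Z\ln(Z/a)+a-Z)}$.
\end{itemize}
Multiplying by $(ed)^n$ gives $(a/Z)^R e^{-qn}$. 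Since $\sum_{n\ge1}e^{-qn}\le 1/(1-e^{-q})$, this completes the proof.
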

\noindent
Setting $d = 2$ recovers the original RingORAM result~\cite{ringoram}.
The key consequence is that the maximum eviction period scales as $A \leq 2Z/(d{-}1)$, compared to $A \leq 2Z$ for binary RingORAM: evictions must be $(d{-}1)\times$ more frequent.

\section{\toolname-ANNS} \label{sec:onyx-anns}

\subsection{Background: DiskANN}\label{sec:background-ann}
\toolname-ANNS builds on DiskANN~\cite{diskann}, a graph-based algorithm that is the state-of-the-art for disk-based ANN search, and is widely deployed in industry~\cite{azure-cosmos-db-vector, zilliz-milvus, pinecone, weaviate, diskann-overview}.

\textbf{Index layout.}
The dataset is organized as a proximity graph on external storage.
For each vector $\point$ with embedding $\pointvec$, the index stores its neighbor list $\adjacencylists(\point)$ and full-precision coordinates $\fullprecision(\point)$ packed together in a contiguous block on disk, so that both can be fetched efficiently with one I/O.
Per-vector quantized hints $\traversalhints(\point)$~\cite{pq-quantized-hints} are stored in memory to approximate distances during traversal, which would otherwise require fetching full-precision vectors for each visited node.

\textbf{Greedy search.}
A greedy beam search navigates the graph to find the top-$K$ vectors closest to a query $\queryvec$.
Beam width $W$ controls how many unvisited candidates are explored concurrently: for each candidate $\point$, the system fetches the combined block $\adjacencylists(\point) \| \fullprecision(\point)$ from disk, scores newly discovered neighbors using the in-memory hints $\traversalhints$, and adds them to the beam.
Since $\fullprecision$ is prefetched alongside $\adjacencylists$, the system maintains an exact top-$K$ ranking throughout the search using the full-precision coordinates.
The search continues until $\candidatelistsize$ nodes have been fetched;
increasing $\candidatelistsize$ improves recall at the cost of more disk accesses.

\subsection{Design Overview} \label{sec:onyx-anns-overview}

\begin{table}[t]
    \centering
    \footnotesize
    \setlength{\tabcolsep}{4pt}
    \begin{tabular*}{\columnwidth}{@{\extracolsep{\fill}}ll cccc@{}}
    \toprule
    & & \textbf{SIFT} & \textbf{MARCO} & \textbf{DEEP} & \textbf{WIKI} \\
    \midrule
    \multirow{2}{*}{\shortstack[l]{\textbf{Bandwidth}\\(reduction)}}
      & Decoupling     & $1.1\times$ & $1.0\times$ & $1.3\times$ & $2.2\times$ \\
      & \toolname-ANNS & $2.7\times$ & $5.0\times$ & $2.0\times$ & $3.8\times$ \\
    \midrule
    \multirow{2}{*}{\shortstack[l]{\textbf{Access Count}\\(increase)}}
      & Decoupling     & $+91\%$ & $+99\%$ & $+72\%$ & $+43\%$ \\
      & \toolname-ANNS & $+8\%$  & $+5\%$  & $+9\%$  & $+15\%$ \\
    \bottomrule
    \end{tabular*}
    \caption{Bandwidth reduction and access count increase of naive decoupling and \toolname-ANNS over DiskANN at top-$10$ $90\%$ recall on our evaluation datasets~(\sectionref{evaluation}).
    }
    \label{tab:onyx_ann_vs_diskann_k10_r95}
\end{table}

\paragraph{Goal: bandwidth-efficient ANN search.}
Recall that our co-design requires a bandwidth-efficient disk-based ANN search.
Existing designs~\cite{diskann, starling, spann} target plain SSDs, where random reads up to 4~KB cost nearly the same. ANN accesses are typically smaller, making access count the core bottleneck and fetching extra bytes effectively free.
With ORAM, however, bandwidth is amplified by $O(\log N)\times$ (\tableref{oram-overhead-breakdown}), and every byte the ANN layer transfers directly affects performance, calling for a new disk-based ANN design that minimizes bandwidth.

\textbf{Decoupling traversal and refinement.}
We start by undoing an access-efficiency optimization that disk-based ANN designs make: fusing traversal and refinement operations.
To achieve high recall, graph traversal must touch many nodes, and this coupling would force our ANN search to fetch large, full precision vectors (384--3072 bytes), consuming excessive bandwidth.
With decoupling, we can follow the standard approach from clustering-based ANNs~\cite{faiss}: use low-precision vectors as hints to prune the traversal candidate list, and fetch full-precision vectors only for the pruned subset.

Naively pruning using the in-memory traversal hints $\traversalhints$, however, is not effective~(\appendixref{ann-ablation}): these hints are often too coarse to filter candidates effectively and require re-ranking nearly the entire candidate list, roughly doubling the access count while barely reducing bandwidth (\tableref{onyx_ann_vs_diskann_k10_r95}).
Increasing traversal hint size to improve pruning precision is also not a viable solution, as it increases the DRAM footprint and thus the deployment cost~(\sectionref{ann-usecases}).

\textbf{Decoupling traversal and pruning hints.}
The core issue with naive decoupling is that it leverages in-memory hints for two tasks with very different precision requirements.
Traversal can tolerate coarse hints because it only needs directional accuracy to move toward the query's neighborhood. Graph connectivity corrects suboptimal local choices through convergent paths.
Pruning, in contrast, is a sensitive filtering step where a mistakenly excluded true neighbor is permanently lost, requiring higher precision.
To address this mismatch, we introduce \emph{compact pruning hints} $\pruninghints$, an intermediate representation that is substantially more precise than $\traversalhints$ but an order of magnitude smaller than $\fullprecision$.
Crucially, these pruning hints are not stored in memory; instead, they are fetched from disk alongside the neighbor list $\adjacencylists$ whenever a node is visited during traversal, keeping the DRAM footprint unchanged.

\textbf{Three-step granular refinement.}
This yields a \emph{three-step granular refinement} strategy (traverse $\to$ prune $\to$ refine) where traversal fetches much smaller blocks ($\adjacencylists \| \pruninghints$ rather than $\adjacencylists \| \fullprecision$) and the majority of expensive full-precision fetches are avoided, all without increasing the memory footprint.
\tableref{onyx_ann_vs_diskann_k10_r95} shows that for the same recall, our approach reduces bandwidth by $2$--$5\times$ across datasets compared to DiskANN, while increasing access count by only $5$--$15\%$.

\subsection{Construction} \label{sec:onyx-anns-construction}

\begin{algorithm}[htb]
\caption{\toolname-ANNS $\GreedySearch$ Algorithm}
\label{alg:onyx-anns-algo}
\begin{algorithmic}[1]
\Statex \hspace{-1.2em}\textbf{Input:} Query vector $\queryvec$
\Statex \hspace{-1.2em}\textbf{Parameters:} Result size $K$, candidate list size $\candidatelistsize$, and pruned list size $\prunedlistsize$ s.t. $K \leq \prunedlistsize \leq \candidatelistsize$
\Statex \hspace{-1.2em}\textbf{Index Layout:} Start node $s$ and traversal hints $\traversalhints$ in memory, per-vector concatenation of pruning hints $\pruninghints$ and neighbor indices $\adjacencylists$ on external storage, and full precision coordinates $\fullprecision$ on external storage
\Statex \hspace{-1.2em}\textbf{Output:} Indices of the $K$-approximate nearest neighbors
\Statex
\State $\candidatelist \gets \{s\}, \visitedlist \gets \emptyset$ \Comment{Candidate and visited lists}
\Statex \textbf{// Phase 1: Traversal}
\While{$\candidatelist \setminus \visitedlist \neq \emptyset$}
    \State let $p^\ast \gets \arg\min_{p \in \candidatelist \setminus \visitedlist}~ \dist(\queryvec, \traversalhints(p))$
    \State fetch $\adjacencylists(p^\ast) \| \pruninghints(p^\ast)$ from \textbf{external storage}
    \State update $\candidatelist \gets \candidatelist \cup \adjacencylists(p^\ast)$ and $\visitedlist \gets \visitedlist \cup \{p^\ast\}$
    \State sort $\candidatelist$ by $\dist(\queryvec, \traversalhints(\cdot))$ and set $\candidatelist \gets \candidatelist[:\candidatelistsize]$
\EndWhile
\Statex \textbf{// Phase 2: Pruning}
\State sort $\candidatelist$ by $\dist(\queryvec, \pruninghints(\cdot))$ using pruning hints
\State set $\prunedlist \gets \candidatelist[:\prunedlistsize]$ \Comment{Pruned candidate list}
\Statex \textbf{// Phase 3: Refinement}
\For{each candidate $p \in \prunedlist$}
    \State fetch $\pointvec \gets \fullprecision(\point)$ from \textbf{external storage}
\EndFor
\State sort $\prunedlist$ by $\dist(\queryvec, \fullprecision(\cdot))$ using full precision vectors
\State set $\resultlist \gets \prunedlist[:K]$ and \textbf{return} $\resultlist$
\end{algorithmic}
\end{algorithm}

\algorithmref{onyx-anns-algo} gives the search procedure for \toolname-ANNS, simplified to beam width $W = 1$ (i.e., one candidate explored per step) for clarity.
In addition to beam width $W$, the algorithm takes three parameters: the result size $K$, the candidate list size $\candidatelistsize$, and the pruned list size $\prunedlistsize$, where $K \leq \prunedlistsize \leq \candidatelistsize$.
To ensure obliviousness, both $\candidatelistsize$ and $\prunedlistsize$ are fixed parameters independent of the query.

\textbf{Index layout.}
\toolname-ANNS reorganizes DiskANN's on-disk layout: $\adjacencylists(\point) \| \pruninghints(\point)$ are packed together (fetched during traversal), while $\fullprecision(\point)$ is stored separately (fetched only during refinement).
Pruning hints $\pruninghints$ are constructed using product quantization~\cite{pq-quantized-hints} at a higher fidelity than $\traversalhints$; we choose $|\pruninghints|$ per dataset to maximize bandwidth reduction.
Only the traversal hints $\traversalhints$ are stored in memory.

\textbf{Memory footprint.}
Since traversal and refinement blocks are now stored separately on disk, they require two independent ORAM clients, which nearly doubles the ORAM-side memory footprint.
Nevertheless, the ORAM client state is typically a small fraction of total memory (the bulk is $\traversalhints$), and we show in \sectionref{eval-compass-comparison} that this leads to only a $10$--$30\%$ increase.

We walk through the three phases below.

\textbf{Phase 1: Traversal (lines 2--6).}
A greedy beam search explores the graph starting from node $s$.
At each step, the top $W$ unvisited candidates (rather than one, as in the simplified \algorithmref{onyx-anns-algo}) are selected by approximate distance $\dist(\queryvec, \traversalhints(\cdot))$.
For each, the system fetches $\adjacencylists(\point) \| \pruninghints(\point)$ from disk in a single I/O.
Newly discovered neighbors are scored using $\traversalhints$ and the beam is updated.
Traversal continues until exactly $\candidatelistsize$ nodes have been visited.
This phase makes $\candidatelistsize$ accesses, each of size $|\adjacencylists| + |\pruninghints|$ bytes.

\textbf{Phase 2: Pruning (lines 7--8).}
The $\candidatelistsize$ candidates are re-ranked using the prefetched pruning hints $\pruninghints$, and only the top $\prunedlistsize$ are retained.
No disk accesses are needed since all pruning hints were prefetched during traversal.

\textbf{Phase 3: Refinement (lines 9--11).}
For each of the $\prunedlistsize$ surviving candidates, $\fullprecision(\point)$ is fetched from disk, candidates are re-ranked by exact distance, and the top-$K$ are returned.
This phase makes exactly $\prunedlistsize$ accesses, each of size $|\fullprecision|$.

\subsection{Analysis} \label{sec:onyx-anns-analysis}

The total access count is $\candidatelistsize + \prunedlistsize$ (vs.\ DiskANN's $\candidatelistsize$).
The per-access bandwidth ratio is: $\frac{\text{BW}_{\text{DiskANN}}}{\text{BW}_{\text{\toolname-ANNS}}} = \frac{|\adjacencylists| + |\fullprecision|}{(|\adjacencylists| + |\pruninghints|) + \frac{\prunedlistsize}{\candidatelistsize} \cdot |\fullprecision|}$
Two terms in the denominator determine the improvement: $(|\adjacencylists| + |\pruninghints|)$, the smaller traversal block, and $\frac{\prunedlistsize}{\candidatelistsize} \cdot |\fullprecision|$, the amortized refinement cost.
These depend on two dataset characteristics.
First, when $|\fullprecision|$ is large, the block size ratio $(|\adjacencylists| + |\fullprecision|) / (|\adjacencylists| + |\pruninghints|)$ is large, since the numerator is dominated by $|\fullprecision|$ while the denominator remains small.
Second, when $|\traversalhints|$ is highly compressed to minimize memory footprint (e.g., for personal databases), $\candidatelistsize$ must be large to achieve high recall with coarse guidance, and pruning hints can filter aggressively, yielding a small $\prunedlistsize / \candidatelistsize$.

Our datasets illustrate both effects nicely (\tableref{onyx_ann_vs_diskann_k10_r95}).
WIKI and MS-MARCO have large embedding vectors ($|\fullprecision| = 3072$~B), and MS-MARCO and SIFT are personal databases that have highly compressed hints $|\fullprecision| / |\traversalhints| \geq 64$.
MS-MARCO benefits from both factors and achieves the largest improvement ($5\times$), followed by WIKI ($3.8\times$) with its large vectors, then SIFT ($2.7\times$) with its better pruning ratio, and finally DEEP, where neither factor applies but we still see $1.8\times$ improvement.

\section{Evaluation}\label{sec:evaluation}

We answer the following questions in this section:
\begin{itemize}[leftmargin=*]
    \item \evalquestion{\toolname-ORAM (\sectionref{eval-oram}).}{oram}{How does \toolname-ORAM perform against state-of-the-art tree-based ORAMs like RingORAM and locality-optimized PathORAM on SSDs?}
    \item \evalquestion{End-to-End Oblivious ANN Search (\sectionref{eval-end-to-end}).}{end-to-end}{How does \toolname compare against state-of-the-art oblivious ANN search and other ORAM-ANN combinations?}
    \item \evalquestion{\toolname Co-design (\sectionref{eval-codesign}).}{codesign}{What is the combined benefit of the \toolname's co-designed primitives, compared to their individual performance?}
    \item \evalquestion{Cost Efficiency (\sectionref{eval-costefficiency}).}{cost-efficiency}{What is the cost benefit (throughput per $\$$) of \toolname compared to state-of-the-art oblivious ANN search systems, across a wide range of hardware configurations?}
\end{itemize}

\subsection{Experimental Setup} \label{sec:eval-setup}
\paragraph{Datasets.} \label{sec:eval-datasets}
We evaluate \toolname on both deployment scenarios from \sectionref{ann-usecases}, adopting two datasets from Compass -- SIFT~\cite{sift} (1M vectors) and MS-MARCO~\cite{msmarco} (8.8M vectors) -- for personal databases and WIKI~\cite{wiki-cohere} (20M vectors) and DEEP~\cite{deep} (60M vectors) for enterprize databases where large indices are sharded into disk-served partitions. Following partitioned disk-based ANN systems~\cite{starling}, we use a DRAM budget of approximately 4~GB per partition, with 2~GB allocated for hints.
Together, these four datasets cover a wide range of index sizes (0.6--64~GB), vector dimensionalities (384~B--3~KB), and hint compression ratios ($12\times$--$96\times$), as summarized in \tableref{dataset-stats}.
We summarize the index construction hyperparameters used in \appendixref{index-hyperparams}.

\begin{table}[t]
\centering
\small
\begin{tabular*}{\columnwidth}{@{\extracolsep{\fill}}lcccc@{}}
\toprule
\textbf{Dataset} & \textbf{\#Vectors} & \textbf{Index Size} & \textbf{Vector Size} & \textbf{Hint Size} \\
\midrule
SIFT & 1M & 0.6 GB & 512 B & 8 B \\
MARCO & 8.8M & 29 GB & 3072 B & 32 B \\
WIKI & 20M & 64 GB & 3072 B & 96 B \\
DEEP & 60M & 31 GB & 384 B & 32 B \\
\bottomrule
\end{tabular*}
\caption{Datasets used in our evaluation. Index size corresponds to the plain graph index (full precision vectors + adjacency lists). Hint size refers to the in-memory hints, significantly compressed to minimize DRAM footprint.}
\label{tab:dataset-stats}
\end{table}

\paragraph{Hardware Configuration.} We run our experiments on a machine with an AMD EPYC 9554 Processor, with AMD SEV-SNP~\cite{amd-sev-snp} enabled, and Micron 7450 MAX SSDs~\cite{micron-7450-max} attached through \texttt{vfio-pci}.
We partition our evaluation setup into \emph{resource units} (RUs) using \texttt{cgroups}, and all experiments are performed on a single RU unless otherwise mentioned.
Since \toolname and each baseline have different SSD requirements (the latter require substantially more SSD resources), we analyze performance on two RU configurations:
a cheaper configuration -- \stdssd\ -- with 1~vCPU, 4~GB of RAM, and 1~SSD unit, providing 85K IOPS, 330~MB/s read bandwidth, 175~MB/s write bandwidth, and 187.5 GB of storage, and a more expensive configuration -- \highssd\ -- with the same vCPU and DRAM, but 4$\times$ the SSD resources. Due to space constraints, we summarize results for the \highssd\ configuration throughout this section, but defer plots to \appendixref{additional-eval}.
Unless otherwise mentioned, we target $K=10$ and 90\% recall, which is standard for benchmarking ANN search~\cite{bigann-benchmarking, bigann-benchmarking-23, ann-benchmarking}.
We estimate dollar cost using GCP's N2D-standard pricing with AMD SEV-SNP enabled and 3-year resource-based committed use discounts (as of March 2026)~\cite{gcp-pricing}.
Under this pricing, vCPUs are \$12.2/month, memory is \$1.2/GB/month, and each local SSD unit is \$6.75/month.

\paragraph{ORAM Parameters.}
For large-bucket ORAMs (RingORAM and \toolname-ORAM), we set the early reshuffle rate to $0.1\%$.
For RingORAM, we vary the bucket capacity $Z$ between 32 and 128, after which point, the large $Z$ capacity increases the overhead for per-bucket metadata fetches such that overall performance is degraded. For \toolname-ORAM, which is optimized for larger bucket sizes, we vary $Z$ between 128 and 1024.
For a given capacity $N$, we pick the $Z$ value that minimizes memory and storage overhead.
Following EnigMap~\cite{enigmap}, we set $Z=4$ for PathORAM and for each block size, vary the level-packing parameter $\ell$ based on how many tree levels fit within a 4~KB SSD page and select the best-performing configuration.

\paragraph{ANN Search Parameters.}
We vary recall by varying the candidate list size $\candidatelistsize$ during the search until the target recall is reached.
For approaches that decouple traversal and refinement, we fix $\candidatelistsize$ to match the coupled baseline and vary $\prunedlistsize$ until we achieve the same recall.
We vary the beam width over $\{4, 8, 16\}$ for DiskANN-based indices (including \toolname-ANN), and the speculation set size and direction filter size $\in$ $[2, 8]$ and $[4, 24]$, respectively, for Compass indices. We target at most $9$ round trips (to match Compass), but if the recall target cannot be met within this budget, we increment the round-trip count until the target is reached.
For all approaches, we vary the number of parallel requests (queue depth) over $\{1, 2, \ldots, 8\}$ to measure the throughput-vs-latency trade-off.
For throughput-vs-recall plots, we limit the total number of visited nodes to $\leq 1500$.

\begin{figure}[!t]
    \centering
    \includegraphics[width=1.02\linewidth]{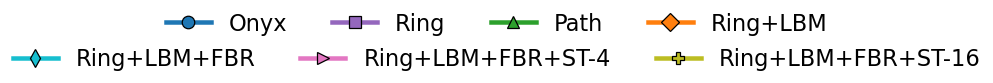}\\[0.2em]
    \includegraphics[width=0.48\linewidth]{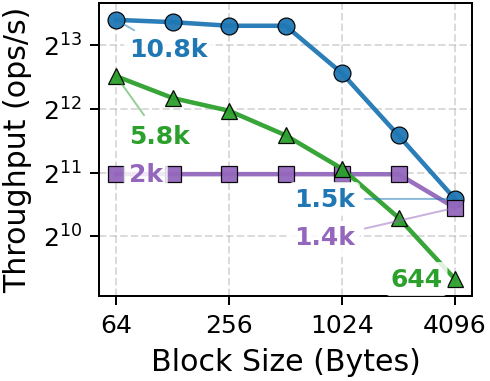} \hfill
    \includegraphics[width=0.48\linewidth]{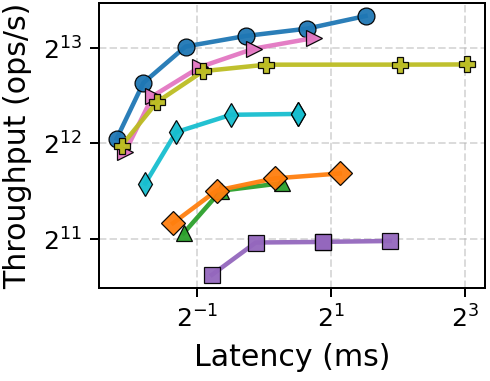}\\[0.2em]
    \makebox[0.48\linewidth]{\small (a) Throughput vs.\ block size} \hfill
    \makebox[0.48\linewidth]{\small (b) Throughput vs.\ latency ($512$~B)}
    \caption{\toolname-ORAM vs prior work for $20$M blocks (\stdssd). LBM (local bucket metadata), FBR (full bucket reads), ST (shallow tree) refer to steps in \sectionref{onyx-oram-construction}; \toolname: Ring+LBM+FBR+ST-8; Ring: RingORAM; Path: locality-optimized PathORAM.}
    \label{fig:oram-eval}
\end{figure}

\subsection{\toolname-ORAM} \label{sec:eval-oram}
We evaluate \toolname-ORAM in isolation against RingORAM~\cite{ringoram} and locality-optimized PathORAM~\cite{pathoram,enigmap} on $20$M block storage, matching the WIKI-20M dataset.
\figureref{oram-eval} provides a detailed comparison between \toolname and each baseline across two performance metrics.

\textbf{Throughput vs.\ block size} (\figureref{oram-eval}(a)).
\toolname-ORAM maintains substantially higher throughput than both baseline ORAM protocols at small (256--512~B) block sizes which are used in ANN search, achieving over 10K logical accesses per second in that range. In contrast, RingORAM is access-count-bound: even at small block sizes, its throughput plateaus around 2K QPS because the high access count per logical operation saturates the SSD's IOPS.
On the other hand, PathORAM is bandwidth-bound: its throughput degrades with increasing block size, falling below RingORAM at 1~KB blocks.
At very large block sizes, \toolname-ORAM also becomes bandwidth-bound and matches the performance of RingORAM because they have similar bandwidth overheads~(\tableref{oram-overhead-breakdown}).

\textbf{Throughput vs.\ latency} (\figureref{oram-eval}(b)).
We evaluate the throughput-latency tradeoff for \toolname and baselines with the addition of several intermediate design points activating the techniques from \sectionref{onyx-oram-construction}. At 512~B, the maximum traversal block size \toolname-ANN uses, \toolname-ORAM achieves up to $5.1\times$ higher throughput and $2.7\times$ lower latency than RingORAM, and up to $3.4\times$ higher throughput and $2.0\times$ lower latency than PathORAM. In sequence, enabling local bucket metadata (LBM), full bucket reads (FBR), and shallow trees with various $d$-arities yield better performance, following the analytical results in \figureref{access-vs-bandwidth-oram} (\sectionref{oram-analysis}). \toolname-ORAM uses each of these optimizations and a $d=8$-ary tree, which both analytically and experimentally yields the best tradeoff between bandwidth and access count overheads.

\begin{figure*}[!t]
    \centering
    \includegraphics[width=0.70\textwidth]{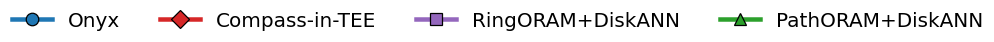}\\[0.2em]
    \rotatebox[origin=c]{90}{\small Throughput (QPS)}\,
    \parbox[c]{0.96\textwidth}{\centering
        \includegraphics[width=0.23\textwidth]{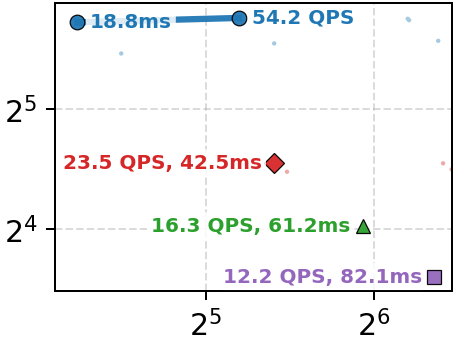} \hfill
        \includegraphics[width=0.23\textwidth]{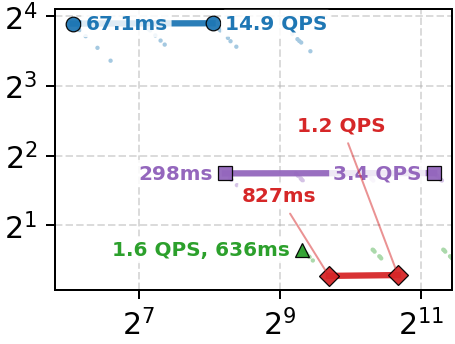} \hfill
        \includegraphics[width=0.23\textwidth]{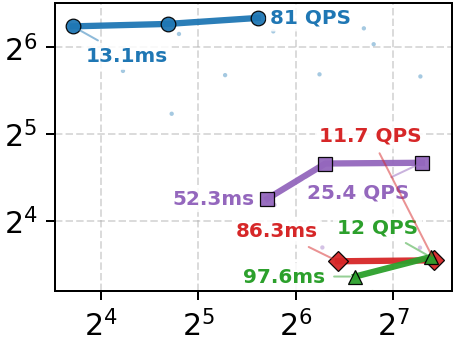} \hfill
        \includegraphics[width=0.23\textwidth]{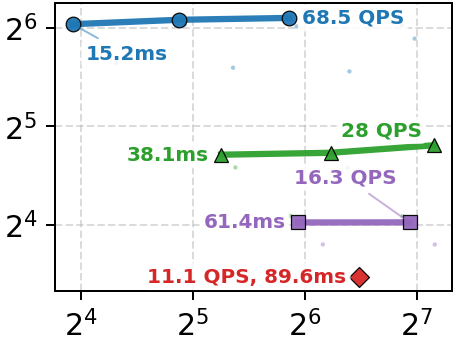}
    }\\[0.1em]
    \hspace*{0.04\textwidth}
    \parbox[t]{0.96\textwidth}{\centering\small Latency (ms)}\\[0.3em]
    \rotatebox[origin=c]{90}{\small Throughput (QPS)}\,
    \parbox[c]{0.96\textwidth}{\centering
        \includegraphics[width=0.23\textwidth]{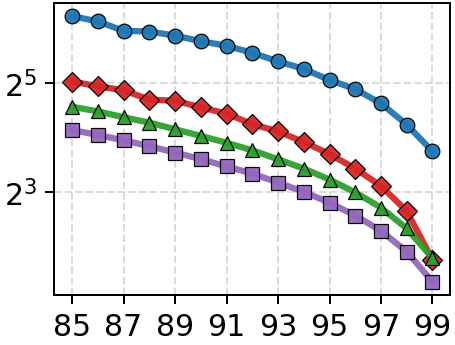} \hfill
        \includegraphics[width=0.23\textwidth]{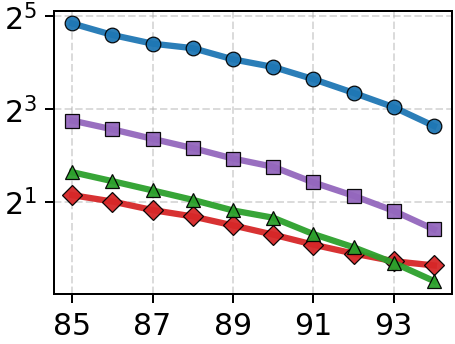} \hfill
        \includegraphics[width=0.23\textwidth]{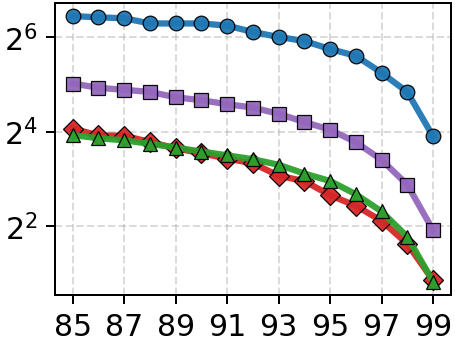} \hfill
        \includegraphics[width=0.23\textwidth]{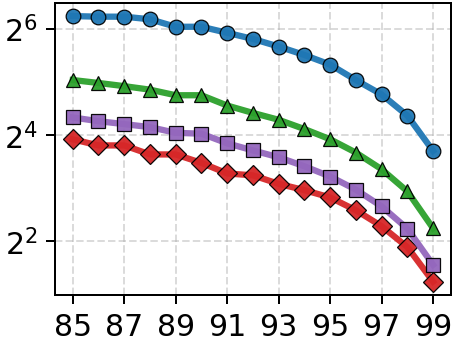}
    }\\[0.1em]
    \hspace*{0.04\textwidth}
    \parbox[t]{0.96\textwidth}{\centering\small Recall}\\[0.2em]
    \hspace*{0.04\textwidth}
    \parbox[t]{0.96\textwidth}{
        \makebox[0.23\textwidth]{\small (a) SIFT} \hfill
        \makebox[0.23\textwidth]{\small (b) MARCO} \hfill
        \makebox[0.23\textwidth]{\small (c) WIKI} \hfill
        \makebox[0.23\textwidth]{\small (d) DEEP}
    }
    \caption{(Top) Pareto frontier of throughput vs.\ latency of \toolname and baselines (\stdssd; top-left is better). Dots below the curves represent sub-optimal parameterizations; single markers are the configuration for a scheme that maximizes both latency and throughput. (Bottom) Throughput vs.\ recall of \toolname and baselines (\stdssd).}
    \label{fig:throughput-vs-latency}
    \label{fig:throughput-vs-recall}
\end{figure*}

\subsection{End-to-End Oblivious ANN Search} \label{sec:eval-end-to-end} \label{sec:eval-compass-comparison}

We compare \toolname against Compass-in-TEE---the state-of-the-art oblivious ANN search system Compass~\cite{compass} deployed inside a TEE to hide disk accesses.
For each system, we issue queries from the dataset's query set with $\{1, 2, 4, 8\}$ parallel requests and measure throughput and latency at each concurrency level.
Compass was designed for oblivious access over a network and the Compass-ANN design was not specifically optimized for disk; to provide stronger baselines, we consider two additional combinations with DiskANN:
RingORAM + DiskANN, which replaces Compass-ANN with DiskANN~\cite{diskann}, and PathORAM + DiskANN, which further replaces RingORAM with locality-optimized PathORAM~\cite{pathoram,enigmap}.
For a fair comparison on SIFT and MS-MARCO, we match Compass's ANN hint sizes.

\textbf{Throughput vs.\ latency.} In \figureref{throughput-vs-latency}, we sweep over number of parallel requests, beamwidth configurations, and pruning hintsizes to measure the end-to-end performance (latency and throughput) of each ANN search system on a \stdssd\ instance; better configurations are located towards the top-left of the figures. Against Compass-in-TEE, \toolname achieves $2.3$--$12.3\times$ lower latency and $2.3$--$12.2\times$ higher throughput.
Even against the disk-optimized baselines, \toolname maintains a substantial lead: $4.0$--$4.5\times$ lower latency and $3.6$--$4.5\times$ higher throughput than RingORAM + DiskANN, and $2.5$--$9.5\times$ lower latency and $2.5$--$9.5\times$ higher throughput than PathORAM + DiskANN.
Under \highssd~configuration, additional SSD resources help the baselines narrow the gap, particularly at the high end---shrinking to $7.4\times$ against Compass-in-TEE, $6.0\times$ against PathORAM + DiskANN, and $3.9\times$ against RingORAM + DiskANN---but a significant gap remains as the higher access counts and bandwidth of the baselines carry proportionally higher compute overhead, and the workload becomes compute-bound (\appendixref{end-to-end-highssd}).

\toolname's improvement is largest where its ANN search achieves the greatest bandwidth reduction: in particular, searches over MS-MARCO benefit from a $5\times$ reduction~(\sectionref{onyx-anns-analysis}) given its highly compressed hints and large full-precision vectors.
For MS-MARCO and WIKI, RingORAM + DiskANN is the strongest baseline because these datasets use large blocks (over 3~KB) from their large embedding vectors, where bandwidth-efficient RingORAM excels.
On DEEP, PathORAM + DiskANN performs best as its small 512~B blocks favor PathORAM's lower access count.
For SIFT, Compass-in-TEE is most competitive: it uses a larger number of neighbors and higher beam width to reduce roundtrips, which is effective on small, localized graphs like 1M-vector SIFT, but becomes wasteful on larger graphs.
In all cases, \toolname achieves $2.3$--$4.5\times$ lower latency and $2.3$--$4.4\times$ higher throughput over the best baseline for each dataset.

\textbf{Throughput vs.\ recall.} \figureref{throughput-vs-recall} shows throughput-vs-recall across each dataset on a \stdssd instance.
\toolname's improvement is consistent across all recall targets: against Compass-in-TEE, we demonstrate throughput improvements of $2.1$--$13.0\times$ across datasets and recall targets; against RingORAM + DiskANN, $2.7$--$5.2\times$; and against PathORAM + DiskANN, $2.3$--$10.2\times$. The per-dataset trends mirror the latency-throughput analysis above---the same factors that favor each baseline at 90\% recall carry over to other recall targets---and in all cases, \toolname achieves $2.1$--$4.7\times$ higher throughput than the best baseline for each dataset.

\begin{table}[t]
\centering
\small
\setlength{\tabcolsep}{4pt}
\begin{tabular*}{\columnwidth}{@{\extracolsep{\fill}}l rrrr@{}}
\toprule
& \textbf{SIFT} & \textbf{MARCO} & \textbf{WIKI} & \textbf{DEEP} \\
\midrule
\multicolumn{5}{@{}l}{\textbf{Memory (GB)}} \\
\quad \toolname          & 0.17 & 0.76  & 2.39    & 2.99  \\
\quad Compass-in-TEE     & 0.16 & 0.99  & 2.93    & 3.84  \\
\quad RingORAM+DiskANN   & 0.21 & 0.73  & 2.39    & 2.57  \\
\quad PathORAM+DiskANN   & 0.13 & 0.61  & 2.16    & 2.31  \\
\midrule
\multicolumn{5}{@{}l}{\textbf{Storage (GB)}} \\
\quad \toolname          & 1.5  & 58.2  & 170.6   & 76.9  \\
\quad Compass-in-TEE     & 12.5 & 776.3 & 1361 & 706.4 \\
\quad RingORAM+DiskANN   & 5.5  & 239.7 & 515   & 271.4 \\
\quad PathORAM+DiskANN   & 1.4  & 108.2 & 208.5   & 70.7  \\
\bottomrule
\end{tabular*}
\caption{Memory and storage overhead of \toolname and baselines across our evaluation datasets.}
\label{tab:memory-usage}
\end{table}

\begin{figure*}[!t]
    \centering
    \includegraphics[width=0.70\textwidth]{plots/legend_secure_comparison.png}\\[0.2em]
    \rotatebox[origin=c]{90}{\small Queries/\$}\,
    \parbox[c]{0.96\textwidth}{\centering
        \includegraphics[width=0.23\textwidth]{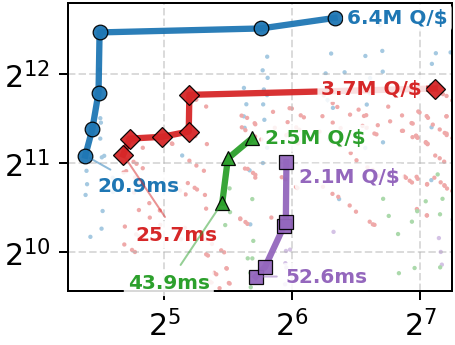} \hfill
        \includegraphics[width=0.23\textwidth]{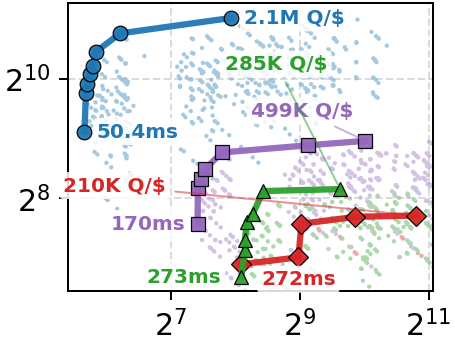} \hfill
        \includegraphics[width=0.23\textwidth]{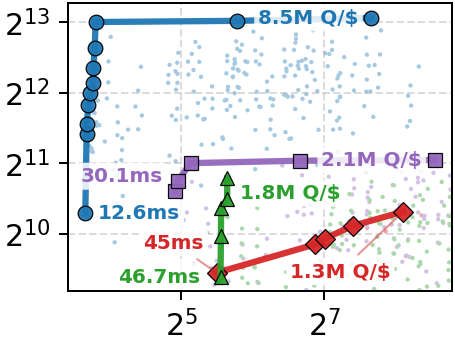} \hfill
        \includegraphics[width=0.23\textwidth]{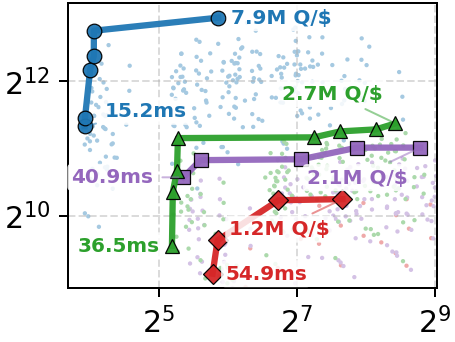}
    }\\[0.1em]
    \hspace*{0.04\textwidth}
    \parbox[t]{0.96\textwidth}{\centering\small Latency (ms)}\\[0.2em]
    \hspace*{0.04\textwidth}
    \parbox[t]{0.96\textwidth}{
        \makebox[0.23\textwidth]{\small (a) SIFT} \hfill
        \makebox[0.23\textwidth]{\small (b) MS-MARCO} \hfill
        \makebox[0.23\textwidth]{\small (c) WIKI} \hfill
        \makebox[0.23\textwidth]{\small (d) DEEP}
    }
    \caption{Pareto frontier of cost-normalized throughput (Queries/\$) vs.\ latency for \toolname and baselines, highlighting the most cost-effective configurations (vCPU and SSD resources allocated). Better configurations are towards the top-left.}
    \label{fig:throughput-dollar}
\end{figure*}

\textbf{Memory and storage.}
\tableref{memory-usage} summarizes memory and storage footprint for hosting each dataset.
PathORAM with DiskANN exhibits the lowest footprint, as it does not store any bucket metadata. \toolname uses only $10$--$30\%$ more memory, because its decoupled search architecture~(\sectionref{onyx-anns-construction}) requires two ORAM clients (one for traversal blocks, one for refinement blocks). This overhead is more noticeable on datasets with smaller hints (e.g., DEEP), where ORAM client state is a larger fraction of total memory usage.
Compass-in-TEE has the highest memory usage across most datasets because it uses RingORAM with a binary tree and stores bucket metadata locally, inflating per-block overhead without the \toolname's $d$-ary tree-based mitigation~(\sectionref{onyx-oram-construction}, \appendixref{oram-memory-storage}).
On SIFT, the index is small enough that fixed-size buffers and other constants dominate memory usage rather than metadata.

\toolname has one of the smallest storage footprints across all datasets ($2.0$--$2.7\times$ amplification over the plaintext index), with PathORAM + DiskANN achieving a comparable footprint when its binary tree capacity tightly fits $N$ (e.g., $2.3\times$ on DEEP).
RingORAM + DiskANN requires $3.0$--$4.1\times$ more storage than \toolname due to the many dummy blocks in its tree~(\sectionref{onyx-oram-construction}), and Compass-in-TEE requires $8$--$13\times$ more because its ORAM parameters have larger amplification and it uses larger blocks to accommodate CompassANN.
A single SSD unit (187.5~GB) suffices to host any of our benchmarks under \toolname.

\subsection{\toolname Co-design} \label{sec:eval-codesign}

\begin{figure}[!t]
    \centering
    \includegraphics[width=0.9\columnwidth]{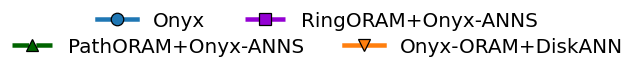}\\[0.4em]
    \rotatebox[origin=c]{90}{\footnotesize Throughput (QPS)}\,
    \parbox[c]{0.93\columnwidth}{\centering
        \includegraphics[width=0.45\columnwidth]{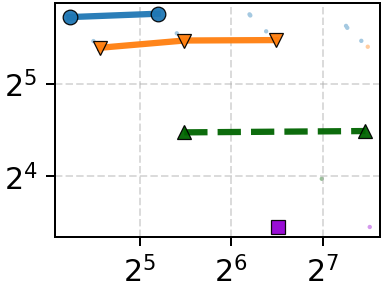}\hfill
        \includegraphics[width=0.45\columnwidth]{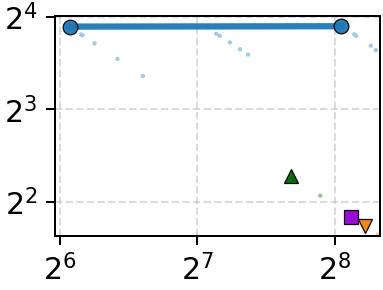}\\[-0.2em]
        \makebox[0.45\columnwidth]{\small (a) SIFT}\hfill
        \makebox[0.45\columnwidth]{\small (b) MS-MARCO}\\[0.4em]
        \includegraphics[width=0.45\columnwidth]{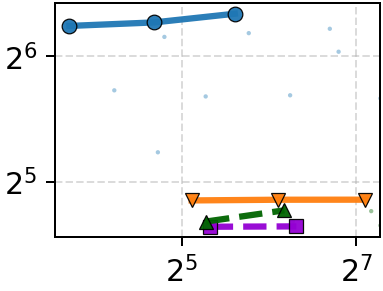}\hfill
        \includegraphics[width=0.45\columnwidth]{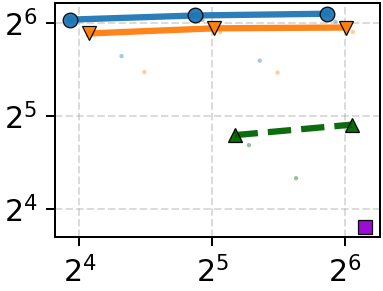}\\[-0.2em]
        \makebox[0.45\columnwidth]{\small (c) WIKI}\hfill
        \makebox[0.45\columnwidth]{\small (d) DEEP}\\[0.2em]
        {\footnotesize Latency (ms)}
    }
    \caption{Importance of ORAM-ANN co-design: pareto frontier of throughput vs. latency for \toolname and single-component (\toolname-ORAM/ANN) constructions. Better configurations are towards the top-left; single markers indicate a scheme's latency- and throughput-optimal configuration.}
    \label{fig:decomposition}
\end{figure}

\figureref{decomposition} isolates the contribution of each \toolname component by comparing against three partial combinations: \toolname-ORAM paired with DiskANN, and \toolname-ANN paired with either RingORAM or PathORAM. In every case, \toolname provides the best throughput at the lowest latency.
On the SIFT and DEEP datasets with small vectors, the ORAM is the primary bottleneck: \toolname-ORAM + Disk approaches \toolname ($10$--$20\%$ lower performance), while replacing \toolname-ORAM with RingORAM or PathORAM degrades performance by $4.9$--$5.0\times$ and $2.3$--$2.4\times$, respectively. In these datasets, \toolname-ANN's bandwidth reduction provides only a small additional benefit because DiskANN already uses small block sizes and \toolname-ORAM becomes access-count-bound.
Path + \toolname-ANNS outperforms Ring + \toolname-ANNS because PathORAM is more access-optimized, and its higher bandwidth overhead is offset by \toolname-ANN's bandwidth savings.
On MS-MARCO and WIKI which have large vectors, both components matter equally: \toolname-ORAM + Disk sees $2.8$--$4.5\times$ lower throughput than \toolname because DiskANN's large per-access bandwidth saturates the SSD, and Ring + \toolname-ANNS or Path + \toolname-ANNS show $3.2$--$4.1\times$ and $3.0$--$3.1\times$ lower throughput, respectively, because \toolname-ORAM significantly outperforms the baselines for the small 256--512 byte block accesses made by \toolname-ANNS~(\sectionref{eval-oram}).
Thus, co-design is essential: neither component alone is sufficient to achieve \toolname's performance across the full workload range.

\subsection{Cost Efficiency} \label{sec:eval-costefficiency}

Given enough CPU and disk resources, even the inefficient baselines we compare against can achieve high throughput and low latency. \figureref{throughput-dollar} demonstrates the \emph{cost-efficiency} advantage of \toolname, normalizing throughput by cost using public GCP pricing data~(\sectionref{eval-setup}).
For each scheme, we sweep over various resource allocations -- vCPU counts in $\{1, 2, 3, 4\}$ and SSD slices in $\{1, 2, 3, 4, 8, 12, 16\}$ -- and plot the Pareto-optimal queries/\$ vs latency tradeoff across all configurations.
\toolname achieves $1.7$--$9.9\times$ higher QPS/\$ than Compass-in-TEE, $3.1$--$4.2\times$ higher than Ring + Disk, and $2.6$--$7.3\times$ higher than Path + Disk.
These improvements are proportional to what we observed in \sectionref{eval-end-to-end}: to match \toolname's performance, baselines must increase both SSD and compute resources. Their poor SSD utilization---driven by higher access counts and bandwidth overheads---demands more SSD slices, and more compute to drive that I/O, so scaling resources does not meaningfully improve cost-efficiency.
\toolname's best cost-efficiency point is always at 1~SSD slice, because its compact index fits within a single slice and its throughput is not bottlenecked by the SSD.
Additionally, storage amplification locks baselines into larger configurations: Compass-in-TEE requires 4--8~SSD slices on the enterprise datasets (e.g., 8~slices on WIKI to fit the 1.4~TB index), while Ring + Disk and Path + Disk require 2--3~slices.

\section{Related Work} \label{sec:related-work}

\paragraph{Private nearest neighbor search.}
A number of cryptographic approaches target private nearest neighbor search with strong security guarantees, using techniques based on ORAM~\cite{compass}, MPC~\cite{sanns}, homomorphic encryption~\cite{hers, panther}, and PIR~\cite{tiptoe, pacmann}.
These approaches incur high overhead due to the strong cryptographic security model they target; among them, Compass~\cite{compass} is the state-of-the-art, but still requires $>$1~s latency and roughly 100 queries per dollar.
Unlike these works, \toolname relies on hardware trust assumptions (TEEs) to offer a practical solution with low latency and high cost-efficiency, achieving $12$~ms latency and over $8$ million queries per dollar~(\sectionref{eval-end-to-end}).

\textbf{Oblivious RAM.}
ORAM was first introduced by the seminal work of Goldreich and Ostrovsky~\cite{goldreich-ostrovsky} in 1996.
Since then, a long line of work has proposed ORAM constructions, including hierarchical schemes~\cite{goldreich-ostrovsky, goodrich-oram, pinkas-oram, panorama, futorama} and the more popular tree-based designs~\cite{pathoram, ringoram, circuit-oram}.
ORAMs have been applied to various settings, including the traditional client-server model~\cite{ringoram, obladi, privatefs, oblivistore, taostore, compass, treebeard}, hardware enclaves~\cite{oblix, oblidb, graphos, zerotrace, snoopy, enigmap, h2o2ram}, and multi-server settings~\cite{abraham-pir-oram, metal, mskt-oram, doram, floram, scoram}.
\toolname operates in the enclave or TEE setting and is the first to optimize ORAM for hiding ANN search disk access patterns.
The closest ORAM work to ours is EnigMap~\cite{enigmap}, which focuses on doubly oblivious access to an external storage from a TEE, where the ORAM client itself is data-oblivious to hide memory access patterns within the TEE.
\toolname-ORAM achieves better performance than EnigMap in our (singly-oblivious) setting by leveraging $d$-ary trees, which were first introduced by Gentry et al.~\cite{gentry-oram}. $d$-ary trees are also used in multi-server settings~\cite{abraham-pir-oram, mskt-oram} where their high bandwidth overhead is offset through private information retrieval.
We focus on tree-based ORAM, but our access-optimized design is also compatible with hierarchical ORAMs~\cite{panorama, futorama} since they also use bandwidth-efficient large bucket designs like RingORAM.

\textbf{Disk-based ANN search.}
Disk-based ANN search includes both graph-based~\cite{diskann} and clustering-based~\cite{faiss, spann, pq-quantized-hints} approaches.
We focus on graph-based designs, in particular DiskANN~\cite{diskann}, which is widely deployed in industry~\cite{azure-cosmos-db-vector, pgvectorscale, jvector, intel-svs, zilliz-milvus, pinecone, diskann-overview} and has been extended to filtered search~\cite{filtered-diskann}, streaming updates~\cite{fresh-diskann, inplace-diskann}, low-memory operation~\cite{lm-diskann}, parallel indexing~\cite{parlayann}, and distributed indices~\cite{distributedann}.
These designs optimize for the plain-SSD regime where bandwidth is cheap and access count is the bottleneck; for instance, Starling~\cite{starling} reduces access count by $2\times$ while increasing bandwidth by up to $16\times$.
\toolname-ANNS targets bandwidth-efficient disk-based ANN search, motivated by ORAM's $O(\log N)\times$ bandwidth amplification~(\sectionref{onyx-anns}).
Bandwidth efficiency has also been explored for in-memory ANN~\cite{membw-ann}, which reduces memory bandwidth through incremental reads and early rejection of candidates.
These techniques are incompatible with our setting because they require fine-grained, data-dependent accesses per candidate.
We focus on graph-based ANN search in this work but our refinement strategy can also be used in clustering-based indices~\cite{faiss, spann} to minimize bandwidth.

\section*{Acknowledgments}
We thank Benjamin Karsin and Vikram Sharma Mailthody for their helpful discussions and insightful comments that helped shape this work.
We are also grateful to the Sky Lab security group members for their valuable feedback.
This work was supported in part by NSF CAREER Award 1943347, and by generous gifts from Accenture, AMD, Anyscale, Broadcom, Cisco, IBM, Intel, Intesa Sanpaolo, Lambda, Lightspeed, Mibura, Microsoft, NVIDIA, Samsung SDS, and SAP.

\bibliographystyle{ACM-Reference-Format}
\bibliography{references}

\appendix
\section{Additional Evaluation Results} \label{app:additional-eval} \label{app:oram-highssd} \label{app:recall-highssd} \label{app:decomp-highssd} \label{app:ann-ablation-other}

This appendix contains additional evaluation results, including the \highssd\ variants of the main-body figures and the ablation studies.

\subsection{\toolname-ORAM (\highssd)}
\figureref{oram-eval-highssd} complements \figureref{oram-eval}. The trends are consistent: \toolname-ORAM's improvements carry over with additional SSD resources. At 512~B, \toolname-ORAM achieves $3.9\times$ higher throughput and $2.7\times$ lower latency than RingORAM, and $2.9\times$ higher throughput and $2.0\times$ lower latency than PathORAM. The throughput gains are smaller than under \stdssd\ ($5.1\times$ and $3.4\times$) because the baselines were I/O-bottlenecked and benefit from the extra SSD resources; the gap nevertheless persists because they become compute-bound~(\sectionref{eval-end-to-end}).

\begin{figure}[H]
    \centering
    \includegraphics[width=1.02\linewidth]{plots/legend_oram_bench_ablation.png}\\[0.2em]
    \includegraphics[width=0.49\linewidth]{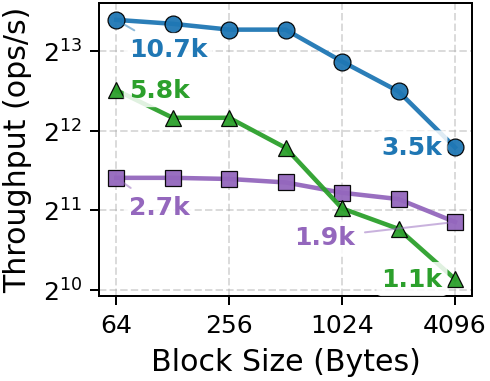} \hfill
    \includegraphics[width=0.49\linewidth]{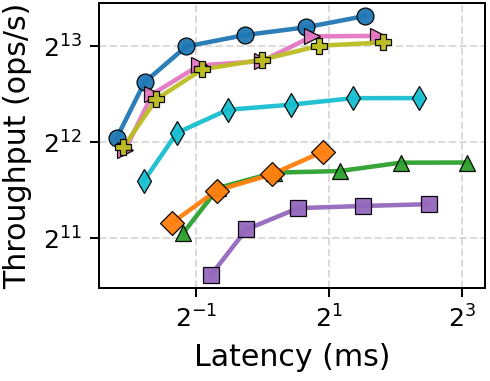}\\[0.2em]
    \makebox[0.49\linewidth]{\small (a) Throughput vs.\ block size} \hfill
    \makebox[0.49\linewidth]{\small (b) Throughput vs.\ latency ($B\!=\!512$~B)}
    \caption{\toolname-ORAM vs prior work for $20$M blocks (\highssd). LBM (local bucket metadata), FBR (full bucket reads), ST (shallow tree) refer to steps in \sectionref{onyx-oram-construction}; \toolname: Ring+LBM+FBR+ST-8; Ring: RingORAM; Path: locality-optimized PathORAM.}
    \label{fig:oram-eval-highssd}
\end{figure}

\subsection{ORAM Memory and Storage} \label{app:oram-memory-storage}
We measure the per-block memory footprint of \toolname-ORAM, PathORAM, RingORAM, and RingORAM with local bucket metadata (LBM) in our implementation.
\toolname-ORAM ($d = 8$) requires ${\sim}8.7$~bytes/block, comparable to RingORAM at ${\sim}8.6$~bytes/block and only slightly above PathORAM at ${\sim}7$ bytes/block.
Adding LBM to RingORAM, however, raises the cost to ${\sim}17.6$~bytes/block, demonstrating that the $d$-ary tree is essential to keep local metadata overhead low (\sectionref{onyx-oram-construction}).

\toolname-ORAM also achieves better storage amplification than both baselines across all evaluated datasets, only requiring $1.7$--$1.8\times$ the disk footprint of the original plaintext dataset, compared to $2.1$--$3.8\times$ for PathORAM and $7.5$--$7.7\times$ for RingORAM. Storage amplification costs can vary across datasets, as $N$ may not tightly fit the ORAM tree capacity, requiring an additional level that is largely underutilized.

\subsection{End-to-End Oblivious ANN Search} \label{app:end-to-end-highssd}
\paragraph{Throughput vs.\ latency.}
\figureref{end-to-end-highssd} complements \figureref{throughput-vs-latency} under the \highssd\ configuration.
\toolname achieves $2.2$--$7.4\times$ lower latency and $2.2$--$7.3\times$ higher throughput against Compass-in-TEE; $3.2$--$4.6\times$ lower latency and $3.2$--$3.9\times$ higher throughput against RingORAM + DiskANN; and $2.4$--$6.0\times$ lower latency and $2.3$--$6.0\times$ higher throughput against PathORAM + DiskANN.

\paragraph{Throughput vs.\ recall.}
\figureref{throughput-vs-recall-highssd} shows the throughput-vs-recall results under \highssd.
The trends are similarly consistent: $2.1$--$7.3\times$ against Compass-in-TEE, $2.6$--$4.0\times$ against RingORAM + DiskANN, and $2.2$--$6.0\times$ against PathORAM + DiskANN across all datasets and recall targets.

\begin{figure*}[t]
    \centering
    \includegraphics[width=0.80\textwidth]{plots/legend_secure_comparison.png}\\[0.2em]
    \rotatebox[origin=c]{90}{\small Throughput (QPS)}\,
    \parbox[c]{0.96\textwidth}{\centering
        \includegraphics[width=0.23\textwidth]{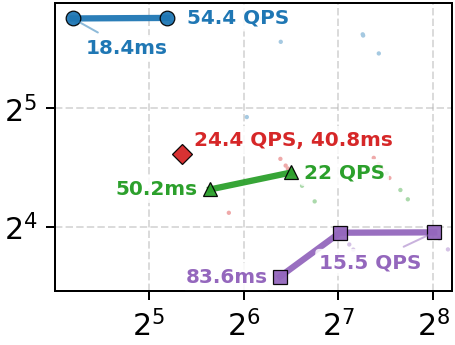} \hfill
        \includegraphics[width=0.23\textwidth]{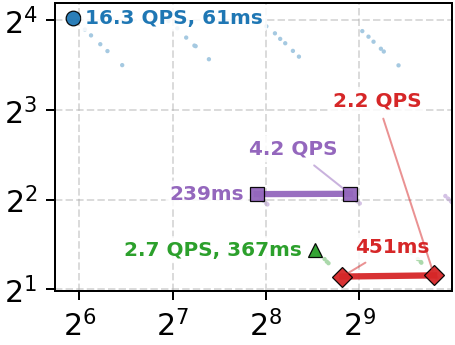} \hfill
        \includegraphics[width=0.23\textwidth]{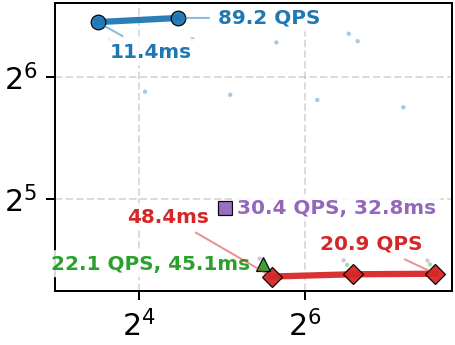} \hfill
        \includegraphics[width=0.23\textwidth]{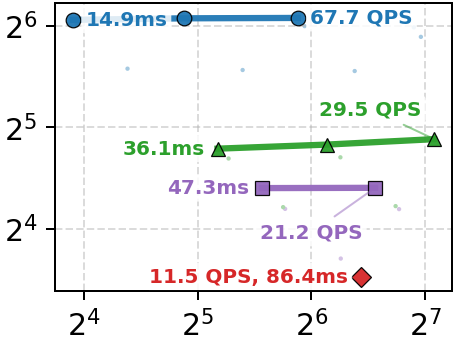}
    }\\[0.1em]
    \hspace*{0.04\textwidth}
    \parbox[t]{0.96\textwidth}{\centering\small Latency (ms)}\\[0.3em]
    \rotatebox[origin=c]{90}{\small Throughput (QPS)}\,
    \parbox[c]{0.96\textwidth}{\centering
        \includegraphics[width=0.23\textwidth]{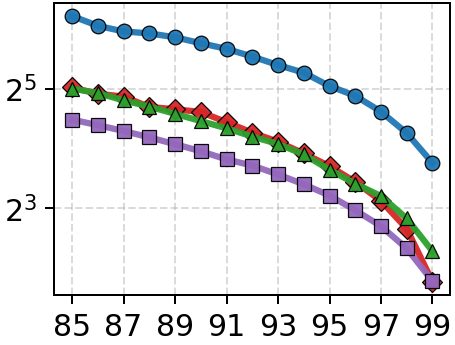} \hfill
        \includegraphics[width=0.23\textwidth]{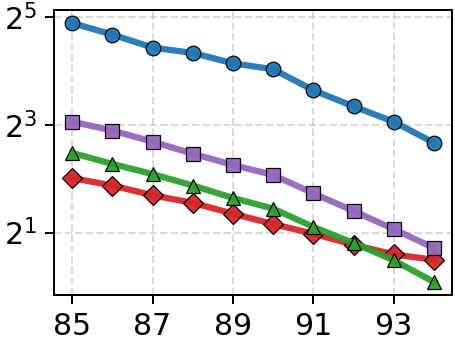} \hfill
        \includegraphics[width=0.23\textwidth]{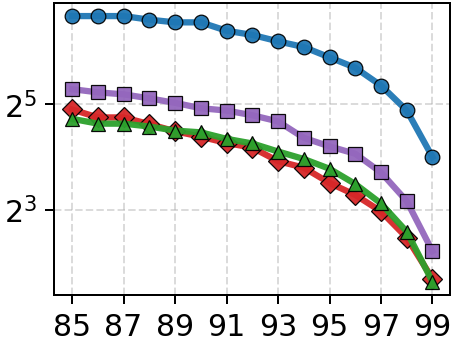} \hfill
        \includegraphics[width=0.23\textwidth]{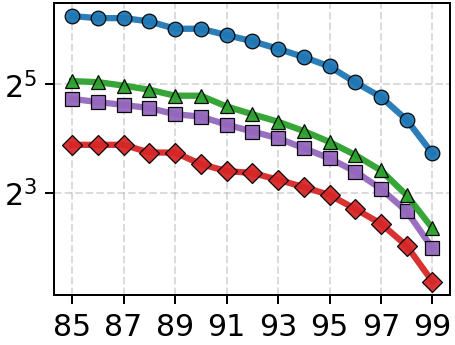}
    }\\[0.1em]
    \hspace*{0.04\textwidth}
    \parbox[t]{0.96\textwidth}{\centering\small Recall}\\[0.2em]
    \hspace*{0.04\textwidth}
    \parbox[t]{0.96\textwidth}{
        \makebox[0.23\textwidth]{\small (a) SIFT} \hfill
        \makebox[0.23\textwidth]{\small (b) MARCO} \hfill
        \makebox[0.23\textwidth]{\small (c) WIKI} \hfill
        \makebox[0.23\textwidth]{\small (d) DEEP}
    }
    \caption{(Top) Pareto frontier of throughput vs.\ latency of \toolname and baselines (\highssd; top-left is better). Dots below the curves represent sub-optimal parameterizations; single markers are the configuration for a scheme that maximizes both latency and throughput. (Bottom) Throughput vs.\ recall of \toolname and baselines (\highssd).}
    \label{fig:end-to-end-highssd}
    \label{fig:throughput-vs-recall-highssd}
\end{figure*}

\subsection{Co-design Ablation (\highssd)} \label{app:decomp-highssd-section}
\figureref{decomposition-highssd} complements \figureref{decomposition}. The same co-design insights from \sectionref{eval-codesign} apply: on small-vector datasets (SIFT, DEEP), the ORAM optimization remains the primary driver, while on large-vector datasets (MS-MARCO, WIKI) both components contribute. The gaps are modestly smaller under \highssd\ as additional SSD resources reduce the I/O bottleneck.

\begin{figure}[t]
    \centering
    \includegraphics[width=0.9\columnwidth]{plots/legend_decomposition.png}\\[0.4em]
    \rotatebox[origin=c]{90}{\footnotesize Throughput (QPS)}\,
    \parbox[c]{0.93\columnwidth}{\centering
        \includegraphics[width=0.45\columnwidth]{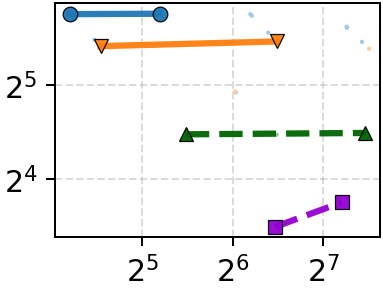}\hfill
        \includegraphics[width=0.45\columnwidth]{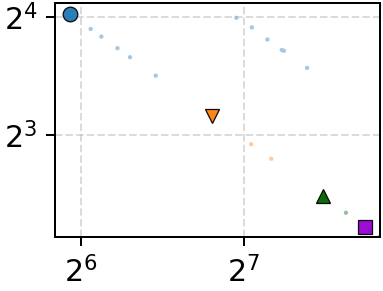}\\[-0.2em]
        \makebox[0.45\columnwidth]{\small (a) SIFT}\hfill
        \makebox[0.45\columnwidth]{\small (b) MS-MARCO}\\[0.4em]
        \includegraphics[width=0.45\columnwidth]{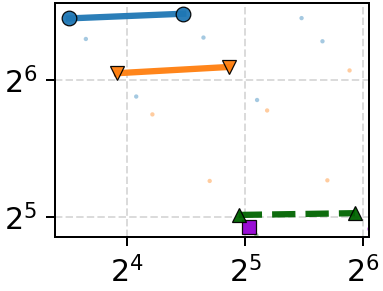}\hfill
        \includegraphics[width=0.45\columnwidth]{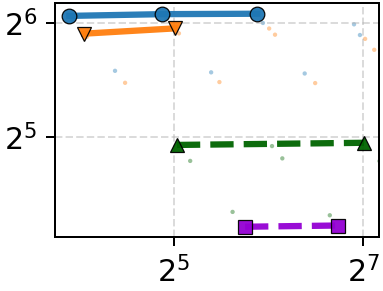}\\[-0.2em]
        \makebox[0.45\columnwidth]{\small (c) WIKI}\hfill
        \makebox[0.45\columnwidth]{\small (d) DEEP}\\[0.2em]
        {\footnotesize Latency (ms)}
    }
    \caption{ORAM-ANN co-design ablation (\highssd). Pareto frontier of throughput vs.\ latency for \toolname and single-component (\toolname-ORAM and \toolname-ANN) constructions. Better configurations are towards the top-left.}
    \label{fig:decomposition-highssd}
\end{figure}

\subsection{\toolname-ANNS Ablation} \label{app:ann-ablation}

\figureref{ablation-ann} compares the throughput vs.\ latency of \toolname-ANNS with various pruning hint sizes, DiskANN (DA), and naive decoupling when each is combined with \toolname-ORAM.

Naive decoupling, which uses small in-memory traversal hints to prune the candidates list, fails to improve over DiskANN on three of four datasets: it is $25$--$27\%$ slower on SIFT and DEEP, and roughly matches DiskANN on MS-MARCO.
The in-memory hints are too coarse to effectively prune candidates, barely reducing bandwidth ($1.0$--$1.3\times$, \tableref{onyx_ann_vs_diskann_k10_r95}) while nearly doubling the access count ($+72$--$99\%$).
Only on WIKI, where vectors are large enough to yield a $2.2\times$ bandwidth reduction, does naive decoupling outperform DiskANN ($1.7\times$).

\toolname-ANNS resolves this by introducing pruning hints that progressively reduce bandwidth without significantly increasing access count ($+5$--$15\%$). As pruning hint size increases, bandwidth savings grow and throughput improves: for instance, on MS-MARCO, increasing hints from 64~B to 256~B improves throughput from 4.8 to 14.9~QPS ($4.5\times$ over DiskANN). The effect is most pronounced on datasets with large vectors (MS-MARCO $4.5\times$, WIKI $2.8\times$) where the bandwidth reduction is largest ($3.8$--$5.0\times$, \sectionref{onyx-anns-analysis}), but \toolname-ANNS also improves on smaller-vector datasets (SIFT $1.2\times$, DEEP $1.1\times$). Without increasing memory footprint beyond what naive decoupling or DiskANN requires, \toolname-ANNS consistently achieves the best throughput-latency tradeoff.

\begin{figure*}[t]
    \centering
    \includegraphics[width=0.40\textwidth]{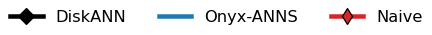}\\[0.2em]
    \rotatebox[origin=c]{90}{\small Throughput (QPS)}\,
    \parbox[c]{0.96\textwidth}{\centering
        \includegraphics[width=0.23\textwidth]{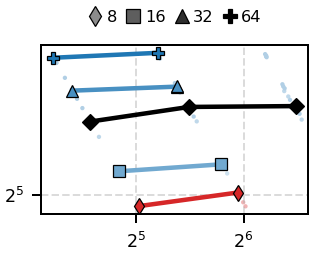} \hfill
        \includegraphics[width=0.23\textwidth]{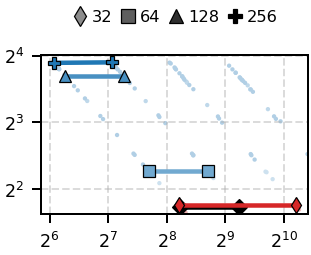} \hfill
        \includegraphics[width=0.23\textwidth]{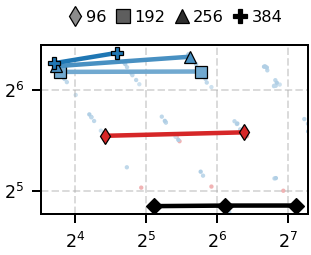} \hfill
        \includegraphics[width=0.23\textwidth]{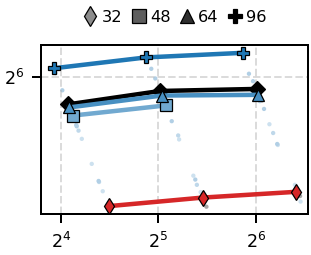}
    }\\[0.1em]
    \hspace*{0.04\textwidth}
    \parbox[t]{0.96\textwidth}{\centering\small Latency (ms)}\\[0.2em]
    \hspace*{0.04\textwidth}
    \parbox[t]{0.96\textwidth}{
        \makebox[0.23\textwidth]{\small (a) SIFT} \hfill
        \makebox[0.23\textwidth]{\small (b) MS-MARCO} \hfill
        \makebox[0.23\textwidth]{\small (c) WIKI} \hfill
        \makebox[0.23\textwidth]{\small (d) DEEP}
    }
    \caption{\toolname-ANNS ablation (\stdssd). Each curve varies the search queue depth (smallest QD value); other labeled values are pruning hint sizes. Pareto frontiers highlight the best-performing configurations for each approach.}
    \label{fig:ablation-ann}
\end{figure*}

\newcommand{\efconstruction}{\texttt{efConstruction}}
\section{Index Construction Hyperparameters} \label{app:index-hyperparams}

\textbf{\toolname Index (DiskANN).}
\toolname builds on DiskANN~\cite{diskann}, which constructs a Vamana proximity graph on disk.
All indices use a build search list size of $L = 128$, L2 distance, and 32-bit floating-point vectors.
The max graph degree is $R = 32$ for all datasets except MS-MARCO, which uses $R = 64$.
For pruning hints, we sweep over multiple hint sizes per dataset and select the best configuration: SIFT uses 16, 32, 64~B; MS-MARCO uses 64, 128, 256~B; WIKI uses 192, 256, 384~B; and DEEP uses 48, 64, 96~B.

\textbf{Compass Index (HNSW + PQ).}
Compass indices were built using the authors' Faiss fork~\cite{compass-faiss-fork}, which constructs an HNSW graph for traversal and uses the same PQ codes~\cite{pq-quantized-hints} as DiskANN for in-memory traversal hints.
For Compass datasets, we used the same parameters as the paper~\cite{compass}: $M=64$ and $\efconstruction = 80$ for SIFT-1M and $M=128$ and $\efconstruction=200$ for MS-MARCO.
For our benchmarks, we use $\efconstruction = 200$ and $M=64$, which means the maximum graph degree is $2M = 128$.

\section{Dynamic ANN Index} \label{app:dynamic-index}
In this section, we discuss how \toolname-ANNS's decoupled index layout can be applied to FreshDiskANN~\cite{fresh-diskann} to support dynamic indices with insertions and deletions.

\paragraph{FreshDiskANN Update Primitives.}
FreshDiskANN's dynamic updates are built from two primitives: $\GreedySearch$ and $\RobustPrune$. At a high level, $\GreedySearch$ traverses the graph to find a candidate list of nearby nodes, while $\RobustPrune$ takes a target node together with such a candidate list and selects which candidates should be neighbors of the target.
Concretely, $\RobustPrune$ processes candidates in increasing order of approximate distance (using in-memory traversal hints) to the target vector $\query$. When it keeps a candidate $\point$, it removes any later candidate $\point'$ that is already well covered by $\point$, i.e., whenever $\dist(\pointvec,\embvec{\point'}) \leq \alpha \cdot \dist(\queryvec,\embvec{\point'})$. Here, $\alpha > 1$ is the usual slack parameter that improves graph connectivity and convergence~\cite{fresh-diskann}.

\textbf{Insertions.}
With these primitives, insertion is straightforward in Fresh-DiskANN. For a new point, we first run $\GreedySearch$ to find nearby nodes, and then run $\RobustPrune$ to choose its outgoing neighbors while maintaining the maximum out-degree. We then try to add reverse edges from those selected neighbors back to the new point by rerunning $\RobustPrune$ on each affected neighborhood.

\textbf{Deletions.}
Deletions are trickier because every node that points to the deleted point must be updated, and there may be many such nodes. As in prior dynamic graph indices, this repair is handled lazily and in batches. In the background, the system streams through the dataset, identifies points whose neighborhoods contain deleted nodes, uses the neighbors of the deleted nodes as repair candidates, and reruns $\RobustPrune$ to rebuild those neighborhoods. Batching amortizes the cost of this full streaming pass over many deletions.

\paragraph{Cost benefit of \toolname-ANNS layout.}
In both cases, the external-storage cost is heavily dominated by fetching neighbor information for the relevant candidates. With the \toolname-ANNS decoupled layout, this requires fetching only adjacency lists together with pruning hints, which fit in much smaller blocks. In contrast, in coupled designs such as DiskANN and Compass, the neighbor information is stored together with the full-precision vectors, so accessing only the graph metadata still requires fetching the entire larger block.
Concretely, for WIKI-20M, this translates to performing an equal number of accesses but with $384$-byte blocks as opposed to $3.2$~KB blocks with the coupled baseline.

\textbf{Writeback via eviction.}
The updated neighborhoods also need to be written back to external storage. This can naturally be piggybacked on the ORAM eviction procedure, as ORAM necessitates that every block that is accessed must be written back.

\section{\toolname-ORAM Protocol}\label{app:oram-protocol}
The \toolname-ORAM protocol is summarized in \algorithmref{oram-protocol}. It relies on read helpers in \algorithmref{oram-helpers} and write helpers in \algorithmref{oram-write-helpers}.

\paragraph{Notation.}
The ORAM tree has arity $d$, depth $L$, bucket capacity $Z$, dummy slots $S$, and eviction frequency $A$ (one eviction per $A$ accesses).
$P(\ell, i)$ denotes the bucket at level $i$ on the root-to-leaf path to leaf $\ell$.
$\stash$ is the client-local stash of blocks not yet written back, and $G$ is the global access counter.
$\textsc{ReverseDigits}_{d,L}(x)$ reverses the $L$-digit base-$d$ representation of $x$, yielding the reverse-lexicographic eviction order.

\paragraph{Position map.}
The position map $\posmap$ maps each block address $a$ to a triple $(\pmleaf, \pmlvl, \pmslot)$:
$\posmap[a].\pmleaf$ is the assigned leaf,
$\posmap[a].\pmlvl$ is the tree level of the bucket currently holding block $a$ (or $\bot$ if the block is in the stash), and
$\posmap[a].\pmslot$ is the logical slot index within that bucket (or $\bot$ if in the stash).
The cached $(\pmlvl, \pmslot)$ allow \textsc{ReadPath} to locate the target block directly: at the matching level it reads the known slot, and at all other levels it reads the next unread dummy.

\paragraph{Bucket layout on disk.}
Each bucket $b$ on disk consists of $Z + S$ individually encrypted slots, denoted $\text{disk}[b][0..Z{+}S{-}1]$.
Each slot encrypts a pair $(\oramaddrs, \oramdata)$: for real blocks, $\oramaddrs$ is the block address; for dummies, $\oramaddrs = \bot$.
Decrypting a slot yields both the address and the data.
During eviction, the entire bucket is read or written as a single I/O; during a regular access, \textsc{ReadBlock} fetches a single block.

\paragraph{Per-bucket local metadata.}
Each bucket $b$ has metadata $\metadata[b]$ stored locally at the ORAM client, consisting of:
\begin{itemize}[leftmargin=*]
\item A monotonic version counter $\oramver$.
\item A pseudorandom permutation $\oramptrs[0..Z{+}S{-}1]$ mapping logical slot indices to physical slot offsets on disk. Logical slots $0..Z{-}1$ correspond to real block slots and $Z..Z{+}S{-}1$ to dummy slots.
\item Per-slot validity bits $\oramvalid[0..Z{+}S{-}1]$, all set to $1$ after a write. A bit is cleared when the corresponding slot is read.
\item A dummy counter $\dummycount$, initialized to $0$ after each write. The next dummy slot is at physical offset $\oramptrs[Z + \dummycount]$.
\item A total access counter $\oramcount$, initialized to $0$ after each write. An early reshuffle is triggered when $\oramcount$ reaches $S$ (i.e., this bucket has been accessed $S$ times).
\end{itemize}

\paragraph{Authenticated encryption.}
All slots are protected by an authenticated encryption (AE) scheme with associated data (instantiated with AES-GCM~\cite{aes-gcm}) using a global secret key $\oramsk$.
We write $\Enc(b, k, x)$ and $\Dec(b, k, \mathsf{ct})$ for AE encryption and decryption of physical slot $k \in [Z{+}S]$ in bucket $b$.
The associated data for AEAD is $\oramaad = (b \,\|\, k \,\|\, \metadata[b].\oramver)$, binding each ciphertext to its bucket index~$b$, physical slot offset~$k$, and the bucket's current version~$\oramver$.
This prevents the adversary from replaying, reordering, or substituting ciphertexts across slots, buckets, or versions.
Every slot read from disk---whether it contains a real block or a dummy---is integrity-verified via the AEAD tag; this ensures the adversary cannot cause selective failures that depend on the block's identity.
If any integrity check fails, the protocol aborts immediately.

\begin{algorithm}[htb]
\caption{\toolname-ORAM Protocol}
\label{alg:oram-protocol}
\begin{algorithmic}[1]
\Statex \hspace{-1.2em}\textbf{Parameters:} Tree-arity $d$, bucket capacity $Z$, dummy slots $S$, eviction frequency $A$
\Statex \hspace{-1.2em}\textbf{Client state:} Position map $\posmap$, stash $\stash$, per-bucket metadata $\metadata$, round counter $G$
\Statex \hspace{-1.2em}\textbf{Server state:} Encrypted $d$-ary tree of depth $L$ with buckets of $Z + S$ slots
\Statex

\Function{Access}{$a, \oramop, \oramdata'$}
    \State $(\pmleaf_a,\, \pmlvl_a,\, \pmslot_a) \gets \posmap[a]$
    \State $\pmleaf' \gets \text{UniformRandom}(0, d^L - 1)$
    \State $\posmap[a] \gets (\pmleaf',\, \bot,\, \bot)$
    \State $\oramdata \gets \textsc{ReadPath}(\pmleaf_a, \pmlvl_a, \pmslot_a)$
    \If{$\oramdata = \bot$}
        $\oramdata \gets$ remove $a$ from $\stash$
    \EndIf
    \If{$\oramop = \textsc{write}$}
        $\oramdata \gets \oramdata'$
    \EndIf
    \State $\stash \gets \stash \cup \{(a, \pmleaf', \oramdata)\}$
    \State $G \gets G + 1$
    \If{$G \bmod A = 0$}
        \textsc{EvictPath}()
    \EndIf
    \State \textsc{EarlyReshuffle}($\pmleaf_a$)
    \If{$\oramop = \textsc{read}$}
        \Return $\oramdata$
    \EndIf
\EndFunction

\Statex

\Function{ReadPath}{$\pmleaf_a, \pmlvl_a, \pmslot_a$}
    \State $\oramdata \gets \bot$
    \For{$i \gets 0$ to $L$}
        \If{$i = \pmlvl_a$}
            $j \gets \pmslot_a$
        \Else\; $j \gets \bot$
        \EndIf
        \State $\oramoffset \gets \textsc{GetOffset}(P(\pmleaf_a, i), j)$
        \State $\oramdata' \gets \textsc{ReadBlock}(P(\pmleaf_a, i), \oramoffset)$
        \If{$\oramdata' \neq \bot$}
            $\oramdata \gets \oramdata'$
        \EndIf
    \EndFor
    \State \Return $\oramdata$
\EndFunction

\Statex

\Function{EvictPath}{}
    \State $\ell_e \gets \textsc{ReverseDigits}_{d,L}(G/A \bmod d^L)$
    \For{$i \gets 0$ to $L$}
        \State $\stash \gets \stash \cup \textsc{ReadBucket}(P(\ell_e, i))$
    \EndFor
    \For{$i \gets L$ to $0$}
        \State \textsc{WriteBucket}($P(\ell_e, i)$, $\stash$)
    \EndFor
\EndFunction

\end{algorithmic}
\end{algorithm}

\begin{algorithm}[htb]
\caption{\toolname-ORAM Read Helpers}
\label{alg:oram-helpers}
\begin{algorithmic}[1]

\Function{GetOffset}{$b, j$}
    \If{$j = \bot$}
        \State $j \gets Z + \metadata[b].\dummycount$
        \State $\metadata[b].\dummycount \mathrel{+}= 1$
    \Else
    \EndIf
    \State $\metadata[b].\oramcount \mathrel{+}= 1$
    \State $k \gets \metadata[b].\oramptrs[j]$
    \State $\metadata[b].\oramvalid[j] \gets 0$
    \State \Return $k$
\EndFunction

\Statex

\Function{ReadBlock}{$b, \oramoffset$}
    \State \textbf{read block} $\text{disk}[b][\oramoffset]$ \textbf{from disk}
    \State $(\_, \oramdata) \gets \Dec(b, \oramoffset, \text{disk}[b][\oramoffset])$
    \If{decryption fails} \textbf{abort} \EndIf
    \State \Return $\oramdata$
\EndFunction

\Statex

\Function{ReadBucket}{$b$}
    \State \textbf{read bucket} $\text{disk}[b]$ \textbf{from disk}
    \For{$j \gets 0$ to $Z + S - 1$}
        \State $k \gets \metadata[b].\oramptrs[j]$
        \State $(\oramaddrs[j],\, \oramdata[j]) \gets \Dec(b, k, \text{disk}[b][j])$
        \If{decryption fails} \textbf{abort} \EndIf
    \EndFor
    \State $\mathsf{blocks} \gets \emptyset$
    \For{$j \gets 0$ to $Z - 1$}
        \State $k \gets \metadata[b].\oramptrs[j]$
        \If{$\oramaddrs[j] \neq \bot$ and $\metadata[b].\oramvalid[j]$}
            \State $a \gets \oramaddrs[j]$
            \State $\pmleaf_a \gets \posmap[a].\pmleaf$
            \State $\mathsf{blocks} \gets \mathsf{blocks} \cup \{(a, \pmleaf_a, \oramdata[j])\}$
        \EndIf
    \EndFor
    \State \Return $\mathsf{blocks}$
\EndFunction

\end{algorithmic}
\end{algorithm}

\begin{algorithm}[htb]
\caption{\toolname-ORAM Write Helpers}
\label{alg:oram-write-helpers}
\begin{algorithmic}[1]

\Function{WriteBucket}{$b, \stash$}
    \State select up to $Z$ blocks from $\stash$ assignable to $b$
    \State let $(a_j, \pmleaf_j, \oramdata_j)_{j=0}^{Z'-1}$ be the selected blocks
    \State remove selected blocks from $\stash$
    \State $\metadata[b].\oramptrs \gets \textsc{RandomPerm}(0, Z{+}S{-}1)$
    \State $\metadata[b].\oramver \mathrel{+}= 1$
    \For{$j \gets 0$ to $Z + S - 1$}
        \State $k \gets \metadata[b].\oramptrs[j]$
        \If{$j < Z'$}
            \State $\text{disk}[b][k] \gets \Enc(b, k, (a_j, \oramdata_j))$
        \Else
            \State $\text{disk}[b][k] \gets \Enc(b, k, (\bot, \bot))$
        \EndIf
    \EndFor
    \State $\metadata[b].\oramvalid \gets \{1\}^{Z+S}$
    \State $\metadata[b].\oramcount \gets 0$
    \State $\metadata[b].\dummycount \gets 0$
    \For{$j \gets 0$ to $Z' - 1$}
        \State $\posmap[a_j].(\pmlvl, \pmslot) \gets (\text{level}(b),\; j)$
    \EndFor
    \State \textbf{write bucket} $\text{disk}[b]$ \textbf{to disk}
\EndFunction

\Statex

\Function{EarlyReshuffle}{$\pmleaf$}
    \For{$i \gets 0$ to $L$}
        \If{$\metadata[P(\pmleaf, i)].\oramcount = S$}
            \State $\stash \gets \stash \cup \textsc{ReadBucket}(P(\pmleaf, i))$
            \State \textsc{WriteBucket}($P(\pmleaf, i)$, $\stash$)
        \EndIf
    \EndFor
\EndFunction

\end{algorithmic}
\end{algorithm}

\newcommand{\oramZAL}{\text{ORAM}^{Z,A}_L}
\newcommand{\oramInfAL}{\text{ORAM}^{\infty,A}_L}
\newcommand{\stashst}{\mathit{st}(S_Z)}
\newcommand{\nT}{n(T)}
\newcommand{\cT}{c(T)}
\newcommand{\XT}{X(T)}
\newcommand{\Yb}{Y(b)}

\section{Eviction Analysis for $d$-ary RingORAM}\label{app:eviction-proof}

This section proves \theoremref{dary-stash}.
The analysis builds directly on the stash analysis of RingORAM~\cite{ringoram} (Section~4 of that work).
We adopt the same notation: $\oramZAL$ denotes a non-recursive Ring ORAM with $L+1$ levels, bucket size $Z$, and one eviction per $A$ accesses, and $\oramInfAL$ denotes the corresponding $\infty$-ORAM with infinite bucket capacity.
The only structural difference is that the tree is $d$-ary instead of binary: each internal node has $d$ children, and the tree has $d^L$ leaves.
Eviction paths cycle through all $d^L$ leaves in reverse-lexicographic order.

The proof follows the same two-step structure as RingORAM.
The first step (Lemmas~1 and~2 of~\cite{ringoram}) establishes that the stash overflow probability of $\oramZAL$ can be bounded by the probability that any rooted subtree in $\oramInfAL$ is overloaded.
These lemmas depend only on the tree structure and the greedy eviction algorithm, and hold unchanged for $d$-ary trees.
The only difference is in the combinatorial bound on the number of rooted subtrees with $n$ nodes.
In a $d$-ary tree, this count equals the $d$-ary Catalan number $\frac{1}{n}\binom{dn}{n-1}$.
Using $\binom{dn}{n-1} \leq \binom{dn}{n} \leq (ed)^n$ (the last step via $\binom{m}{k} \leq (em/k)^k$), the subtree count is at most $(ed)^n$.
This reduces to $(2e)^n$ for $d = 2$, which is a slightly looser but simpler bound than the $4^n$ used in~\cite{ringoram}.
Concretely:
\[
    \Pr[\stashst > R] \;\leq\; \sum_{n \geq 1} (ed)^n \max_{T: \nT = n} \Pr[\XT > \cT + R],
\]
where $T$ ranges over rooted subtrees of $\oramInfAL$, $\nT$ is the number of nodes, $\cT = \nT \cdot Z$ is the capacity, and $\XT$ is the number of blocks in $T$ before post-processing.

The second step is where the generalization to $d$-ary trees matters: we must bound the expected bucket load $E[\Yb]$ for each bucket $b$ in $\oramInfAL$ before post-processing.

\begin{lemma}[Expected bucket load in $d$-ary RingORAM]\label{lem:dary-load}
For any bucket $b \in \oramInfAL$ with a $d$-ary tree, if $N \leq \frac{A(d-1)}{2} \cdot d^L$, then $E[\Yb] \leq \frac{A(d-1)}{2}$.
\end{lemma}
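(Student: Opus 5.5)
We follow the structure of Lemma~3 in the RingORAM analysis~\cite{ringoram}, replacing binary branching by $d$-ary branching. Fix a bucket $b$ at level $i$ of $\oramInfAL$. Two facts drive the bound. \emph{(i)} In the $\infty$-ORAM, a block assigned to a uniformly random leaf passes through $b$ only if its leaf lies in the subtree of $b$, which happens with probability $d^{-i}$. Then, on the first eviction path that goes through $b$ afterwards, the block is pushed strictly below $b$, or it stays in $b$ if its leaf agrees with that path only up to level $i$. \emph{(ii)} Under reverse-lexicographic order over $d^L$ leaves, the eviction paths cycle through the $d$ children of each level-$i$ node in round-robin fashion. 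Hence bucket $b$ is on the eviction path exactly once every $d^i$ evictions, i.e.\ once every $A d^i$ accesses, and consecutive visits to $b$ go to distinct children in a fixed cyclic order.

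\textbf{Step 1: load from one eviction to the next.} Consider the moment just after $b$ was last evicted through, say into child $c_0$, and count the blocks that can be sitting in $b$ when we look. A block ends in $b$ (rather than deeper) exactly when it was evicted into $b$ by a path through $b$ but whose child-at-level-$i{+}1$ differs from the block's assigned child. Since $b$ is emptied of all blocks destined for the current child at each visit, the blocks remaining are those destined for the $d-1$ other children. Between two visits, the number of accesses whose remapped block falls in $b$'s subtree is $A d^i\cdot d^{-i}=A$ in expectation (using $N$ only through the bound on distinct-block occupancy). Following the binary argument, the resident set of $b$ is a union of ``waiting'' blocks destined for child $c_j$, and each is cleared when the round-robin next hits $c_j$.

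\textbf{Step 2: sum over children with the cyclic schedule.} Label the children by how many visits remain until they are next evicted through: $1,2,\dots,d-1$ (the one just evicted is $0$). Blocks for the child at distance $k$ have been accumulating for $d-k$ visit-intervals. Each interval contributes expected $A/d$ new blocks for any fixed child (uniform leaf within the subtree). Thus
\[
E[\Yb]\le \sum_{k=1}^{d-1}\frac{A}{d}\,(d-k)=\frac{A}{d}\cdot\frac{d(d-1)}{2}=\frac{A(d-1)}{2},
\]
and the worst case over observation times inside an interval is bounded by this same quantity. This is the binary value $A/2$ times $d-1$.

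\textbf{Step 3: role of $N$ and main obstacle.} The counting above treats each access as depositing a fresh block. Accesses to blocks already residing in the tree only remove them, so they can only lower the count. The hypothesis $N\le \frac{A(d-1)}{2}d^L$ matters for the leaf level, where no further eviction drains the bucket. There, the load is at most the number of distinct blocks mapped to that leaf, which has expectation $N/d^L\le A(d-1)/2$. I expect the main obstacle to be making Step~1 rigorous: showing that repeated accesses to the same block, and blocks deposited before the previous visit, never increase $E[\Yb]$ beyond the fresh-block count. I would handle this as RingORAM does, by tracking for each block its last access time and arguing it lies in $b$ only if that access falls in the appropriate window. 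The expectation then becomes a sum of indicator probabilities over at most the window length of distinct accesses, and the window sum above bounds it.
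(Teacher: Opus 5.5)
Your proposal is correct and follows the paper's argument. The leaf case is the same bound $E[Y(b)]\le N d^{-L}$. For an internal bucket at level $i$, you use the same reverse-lexicographic round-robin over children (consecutive child evictions $d^i$ evictions, i.e.\ $A d^i$ accesses, apart) and the same last-access-timestamp argument: only blocks remapped after their child's last eviction, and before the last eviction through $b$, can reside in $b$. You sum per child ($d-k$ windows of $A/d$ each) where the paper sums per window ($j$ eligible children of $A/d$ each), but this is the same double sum and gives $A(d-1)/2$.
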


\begin{proof}
\textbf{Leaf buckets.}
A leaf bucket $b$ contains blocks placed there by the last \textsc{EvictPath} through that leaf.
There are at most $N$ distinct blocks, each mapped to $b$ independently with probability $d^{-L}$.
Thus $E[\Yb] \leq N \cdot d^{-L} \leq \frac{A(d-1)}{2}$.

\textbf{Non-leaf buckets.}
Let $b$ be a bucket at level $i$ ($0 \leq i < L$) with $d$ children $c_1, \ldots, c_d$.
Let $m_1 < m_2 < \cdots < m_d$ be the times of the last \textsc{EvictPath} operation through each child.
Due to the deterministic reverse-lexicographic eviction order, a bucket at level $i$ is evicted every $d^{i}$ eviction operations, and consecutive children of $b$ are evicted $d^i$ eviction operations apart.
Therefore $m_d - m_1 = (d-1) \cdot d^i$.

We now count which blocks can reside in $b$ after the last eviction through $b$ (at time $m_d$).
When a block is accessed and remapped, it gets time stamp $m^*$, which is the current eviction counter.
\begin{itemize}[leftmargin=*]
    \item Blocks with timestamp $m^* \leq m_1$: all $d$ children have been evicted after these blocks were created, so these blocks have been pushed to a child or deeper. They are \emph{not} in $b$.
    \item Blocks with timestamp $m^* > m_d$: these were created after the last eviction through $b$ and have not yet been processed. They are \emph{not} in $b$.
    \item Blocks with timestamp $m_j < m^* \leq m_{j+1}$ for some $1 \leq j \leq d-1$: during the eviction at time $m_d$, \textsc{EvictPath} reads $b$ and writes it back, pushing blocks as deep as possible along the eviction path. Blocks mapped to children $c_{j+1}, \ldots, c_d$ (whose evictions at $m_{j+1}, \ldots, m_d$ occur after the block was created) are pushed to those children. However, blocks mapped to children $c_1, \ldots, c_j$ (whose last evictions at $m_1, \ldots, m_j$ occurred \emph{before} the block was created) cannot be pushed further, because \textsc{EvictPath} only pushes blocks along the current eviction path, and children $c_1, \ldots, c_j$ are not on the eviction paths at times $m_{j+1}, \ldots, m_d$. These blocks remain in $b$.
\end{itemize}

For blocks in the third case with $m_j < m^* \leq m_{j+1}$:
the number of accesses in this window is $A \cdot (m_{j+1} - m_j) = A \cdot d^i$, and each block is mapped to one of $c_1, \ldots, c_j$'s subtrees with probability $j \cdot d^{-(i+1)}$.
Summing over all $j$ from $1$ to $d-1$:
\begin{align*}
    E[\Yb] &= \sum_{j=1}^{d-1} A \cdot d^i \cdot j \cdot d^{-(i+1)} = \frac{A}{d} \sum_{j=1}^{d-1} j \\
             &= \frac{A}{d} \cdot \frac{(d-1)d}{2} = \frac{A(d-1)}{2d} = \frac{A(d-1)}{2}
\end{align*}
\end{proof}

\noindent
With \lemmaref{dary-load} in hand, the remainder of the proof follows RingORAM~\cite{ringoram} exactly, with $a = A(d{-}1)/2$ replacing $a = A/2$ and $(ed)^n$ replacing $4^n$.
For any rooted subtree $T$ with $n = \nT$ nodes, $E[\XT] \leq n \cdot a$.
Since the block indicators $X_i(T)$ are independent (each determined by a fresh random leaf assignment), the Chernoff bound gives:
\[
    \Pr[\XT > \cT + R] \;\leq\;
        \left(\frac{a}{Z}\right)^{\!R}
        \cdot e^{-n\bigl[Z \ln(Z/a) + a - Z\bigr]}.
\]
Let $q = Z \ln(Z/a) + a - Z - 1 - \ln d$.
Substituting into the union bound with the $(ed)^n$ subtree count and summing the geometric series:
\[
    \Pr[\stashst > R]
        \;\leq\; \sum_{n \geq 1}
            \left(\frac{a}{Z}\right)^{\!R}
            \cdot e^{-qn}
        \;=\; \frac{(a/Z)^{R}}{1 - e^{-q}}.
\]
The stash overflow probability decreases exponentially in $R$ whenever $q > 0$, which completes the proof of \theoremref{dary-stash}. \qed

\section{Security Proof}\label{app:security-proof}

\begin{figure}[t]
\centering
\setlength{\fboxsep}{8pt}
\fbox{
\parbox{0.95\linewidth}{
\textbf{Interfaces.}
\begin{itemize}[leftmargin=1.5em]
    \item $\Setup(\idx)$: initialize $\searchsystem$ on index $\idx$ using public parameters $\publicparams$.
    \item Requests are drawn from the public API in \sectionref{system-overview}, namely $\Search(\queryvec, k)$, $\Insert(\point, \pointvec)$, and $\Delete(\point)$.
    During any such operation, whenever the protocol issues an external storage read or write, the request is sent to $\adversary$, which observes the access trace and may return an arbitrary response.
    If any operations outputs abort due to integrity checks, the challenger aborts and the adversary loses.
\end{itemize}

\textbf{Game.}
\begin{enumerate}[leftmargin=1.5em]
    \item Challenger samples a uniformly random bit $b \in \{0,1\}$.
    \item The adversary chooses two equally sized datasets $D_0$ and $D_1$ and one set of public parameters $\publicparams$.
    The challenger constructs the corresponding index $\idx_b$ and runs $\searchsystem.\Setup(\idx_b)$.
    \item The adversary iterates adaptively.
    At step $i$, it chooses a pair of operations $(o_{i,0}, o_{i,1})$, potentially based on all prior observations, subject to the constraint that $o_{i,0}$ and $o_{i,1}$ have the same operation type.
    The challenger then executes the corresponding operation $o_{i,b}$ on $\idx_b$.
    \item Adversary outputs a guess $b'$ and wins if $b' = b$.
\end{enumerate}
}
}
    \caption{Security game for disk-access privacy for a disk-resident ANN search system $\searchsystem$.}\label{fig:disk-security-game}
\end{figure}

\begin{proof}[Proof of \theoremref{disk-privacy}]
We prove indistinguishability via a sequence of hybrids.
Let $\lambda$ denote the security parameter.

\paragraph{Hybrid $H_0$.}
The real security game (\figureref{disk-security-game}) with secret bit $b$.
The challenger runs $\Setup(\idx_b)$ and executes each operation $o_{i,b}$ on $\idx_b$.
The adversary observes physical disk addresses and ciphertexts for each storage access, and can respond arbitrarily to any access request.

\paragraph{Hybrid $H_1$: challenger aborts on any tampering.}
Identical to $H_0$ except that the challenger aborts (and the adversary loses) whenever the adversary returns a ciphertext for slot $(b, k, \oramver)$ that the challenger did not produce for that slot.
$H_0 \approx_c H_1$ because each slot is encrypted with authenticated encryption using associated data $(b \,\|\, k \,\|\, \oramver)$ that uniquely identifies the bucket, physical offset, and version (a monotonic counter incremented on every bucket write).
The challenger validates every response against this associated data and aborts if verification fails.
Thus, the challenger not aborting even on seeing a tampered ciphertext would imply breaking the INT-CTXT property~\cite{authenticated-encryption} of authenticated encryption, which happens with negligible probability.
This means that in $H_1$, the adversary can only win by following the protocol faithfully, returning exactly the ciphertexts the challenger wrote for each slot.
Its only remaining information channel is the pattern of physical disk addresses.

\paragraph{Hybrid $H_2$: replace ciphertexts with encryptions of zeros.}
Identical to $H_1$ except that every ciphertext written to disk encrypts a fixed string $0^B$ (where $B$ is the slot size) instead of the real payload $(\oramaddrs, \oramdata)$.
The challenger now maintains the entire ORAM tree on disk internally, so that it can follow the same access patterns as in $H_1$; only the plaintext inside each ciphertext visible to the adversary changes.
Since each slot is encrypted with a fresh randomness, $H_0 \approx_c H_1$ by IND-CPA security of authenticated encryption~\cite{authenticated-encryption}.

\paragraph{Hybrid $H_3$: switch to the other world.}
Identical to $H_2$ except the challenger runs $\Setup(\idx_{b'})$ and executes $o_{i,b'}$ (where $b' = 1 - b$).
This switch is well-defined because the security game requires $o_{i,0}$ and $o_{i,1}$ to have the same operation type and $|D_0| = |D_1|$, so the challenger can execute the $b'$-world operations using the same public parameters.
Since all ciphertexts encrypt $0^B$ and the adversary must follow the protocol faithfully (from $H_2$), the only observable difference between $H_2$ and $H_3$ is the pattern of physical disk addresses.
We show this pattern is identically distributed by arguing that neither the ANN layer nor the ORAM layer introduces data-dependent disk accesses.

\smallskip\noindent\emph{ANN layer: logical access pattern is data-oblivious.}
The number of ORAM accesses and the access granularity per operation depends only on $\publicparams$ and the operation type:
\begin{itemize}[leftmargin=*]
    \item $\Setup$ writes $2N$ logical blocks ($N$ per ORAM instance), where $N = |D_0| = |D_1|$.
    \item $\Search$ makes exactly $\candidatelistsize$ logical reads of block size $|\adjacencylists| + |\pruninghints|$ from the traversal ORAM and $\prunedlistsize$ logical reads of block size $|\fullprecision|$ from the refinement ORAM (\algorithmref{onyx-anns-algo}).
    $\Insert$ and $\Delete$ also make similar logical accesses based on static public parameters~(\appendixref{dynamic-index}).
\end{itemize}
Since $o_{i,0}$ and $o_{i,1}$ have the same operation type (by the game's constraint), the logical access sequences in $H_2$ and $H_3$ have the same length.

\smallskip\noindent\emph{ORAM layer: physical access pattern is data-oblivious.}
We walk through each subroutine that issues disk I/O and show that the physical addresses depend only on the ORAM's random coins and public state, not on which logical blocks are accessed:
\begin{itemize}[leftmargin=*]
    \item \textsc{ReadPath} (\algorithmref{oram-protocol}): reads one slot per level on path $P(\pmleaf_a, 0), \ldots, P(\pmleaf_a, L)$.
    The leaf $\pmleaf_a$ was drawn uniformly at random from $[d^L]$ when the block was last written, independent of block identity.
    At each level, the physical slot offset is determined by the local pseudorandom permutation $\oramptrs$ stored inside the TEE, invisible to the adversary.
    Whether the target block or a dummy is read, the adversary sees a single slot read at a random offset.

    \item \textsc{EvictPath} (\algorithmref{oram-protocol}): the eviction leaf is computed deterministically from the public access counter $G$ via $\textsc{ReverseDigits}_{d,L}$.
    At each level, the entire bucket is read and then written back---the I/O pattern is fixed and independent of bucket contents.

    \item \textsc{EarlyReshuffle} (\algorithmref{oram-write-helpers}): triggers when $\oramcount = S$ for a bucket on the accessed path.
    The counter $\oramcount$ is public (it equals the number of times the bucket has been accessed since its last reshuffle, which the adversary can track).
    The reshuffle reads and writes the full bucket, a fixed I/O pattern.
\end{itemize}
The two ORAM instances (traversal and refinement) use independent keys, position maps, and random coins, so the argument applies to each instance separately, and their joint access trace is identically distributed in $H_2$ and $H_3$.

\smallskip\noindent
$H_2 \equiv H_3$: the physical access pattern and all ciphertexts are identically distributed.

\paragraph{Hybrid $H_4$: restore real ciphertexts for world $b'$.}
Identical to $H_3$ except ciphertexts now encrypt the real payloads for world $b'$.
By the same IND-CCA2 argument as $H_0 \approx_c H_1$, we have $H_3 \approx_c H_4$.
Note that $H_4$ is exactly the security game with secret bit $b'$.

\paragraph{Conclusion.}
\[
    H_0 \approx_c H_1 \approx_c H_2 \equiv H_3 \approx_c H_4,
\]
where $H_0$ is the game with bit $b$ and $H_4$ is the game with bit $1{-}b$.
Therefore $\Pr[b' = b] \leq \frac{1}{2} + \negl(\lambda)$.
\end{proof}

\end{document}